\documentclass[11pt, a4paper, onecolumn]{article}

\usepackage{pstricks} 
\usepackage{graphicx} 
\usepackage[english]{babel}
\usepackage{amsmath, amssymb, amsfonts,  bm}
\usepackage{mathrsfs}
\usepackage[mathcal]{euscript}
\usepackage{enumerate}
\usepackage{algorithm}
\usepackage{algpseudocode}

\usepackage{float}
\usepackage[left=1.6cm, right=1.6cm, top=1.6cm, bottom=2cm]{geometry}

\usepackage{color,graphicx,epstopdf}
\usepackage{dsfont}
\usepackage{hyperref}
\usepackage{cite} 
\usepackage{algorithm} 
\usepackage{tikz-cd}
\usepackage{mathtools}
\usepackage{indentfirst} 
\usepackage{subcaption} 

\usepackage{amsmath}
\usepackage{setspace}
\newtheorem{theorem}{Theorem}
\newtheorem{definition}{Definition}
\newtheorem{lemma}{Lemma}

\newtheorem{remark}{Remark}

\newtheorem{assumption}{Assumption}

\DeclareFontFamily{OT1}{pzc}{}
\DeclareFontShape{OT1}{pzc}{m}{it}{<-> s * [1.200] pzcmi7t}{}
\DeclareMathAlphabet{\mathpzc}{OT1}{pzc}{m}{it}

\newcommand*\mcapinn[2]{\vcenter{\hbox{$\mathsurround=0pt
  \ifx\displaystyle#1\textstyle\else#1\fi\bigcap$}}}

\newcommand*\mcupinn[2]{\vcenter{\hbox{$\mathsurround=0pt
  \ifx\displaystyle#1\textstyle\else#1\fi\bigcup$}}}

\def\begequarr{\begin{eqnarray}}
\def\endequarr{\end{eqnarray}}
\def\begequarrs{\begin{eqnarray*}}
\def\endequarrs{\end{eqnarray*}}
\def\begequ{\begin{equation}}
\def\endequ{\end{equation}}
\def\begequs{\begin{equation*}}
\def\endequs{\end{equation*}}
\def\begite{\begin{itemize}}
\def\endite{\end{itemize}}

\def\begcen{\begin{center}}
\def\endcen{\end{center}}
\def\begrem{\begin{remark}\rm}
\def\endrem{\end{remark}}
\def\ba{\begin{aligned}}
\def\ea{\end{aligned}}

\newcommand{\mG}{\mathrm{G}}

\newcommand{\bb}{\mathbf{b}}

\newcommand{\qb}{\mathbf{q}}

\newcommand{\xb}{\mathbf{x}}
\newcommand{\yb}{\mathbf{y}}
\newcommand{\zb}{\mathbf{z}}

\newcommand{\Hb}{\mathbf{H}}
\newcommand{\Hc}{\mathcal{H}}
\newcommand{\Lb}{\mathbf{L}}
\newcommand{\Lc}{\mathcal{L}}
\newcommand{\Ib}{\mathbf{I}}

\newcommand{\Qb}{\mathbf{Q}}

\newcommand{\Jb}{\mathbf{J}}

\newcommand{\Qc}{\mathcal{Q}}
\newcommand{\Sc}{\mathcal{S}}

\newcommand{\Tb}{\mathbf{T}}
\newcommand{\Tc}{\mathcal{T}}

\newcommand{\yl}{\overline{\yb}}

\newcommand{\Hl}{\overline{\mathcal{H}}}

\newcommand{\psib}{\bm{\psi}}
\newcommand{\sbo}{\mathbf{s}^\ast}
\newcommand{\sil}{\overline{\boldsymbol{\sigma}}}
\newcommand{\ych}{\mathbf{\overline{y}}^{\perp}_{t}}
\newcommand{\ypi}{\mathbf{\overline{y}}^{\parallel}_{t}}
\newcommand{\ychd}{\mathbf{{\overline{y}}}^{\perp}_{t+1}}
\newcommand{\ypid}{\mathbf{{\overline{y}}}^{\parallel}_{t+1}}
\newcommand{\sigmab}{\boldsymbol{\sigma}}
\newcommand{\onb}{\mathbb{I}}

\title{\bf Distributed Nash Equilibrium Seeking\\ with Logarithmic Bit Rates over Digital Channels 
}

\author{%
Zihao Ren,
Chengyang Jiang,
Lei Wang,
Yang Liu, and
Kemi Ding%
\thanks{Z. Ren and L. Wang are with the College of Control Science and Engineering, Zhejiang University, China (Email: zhren2000@zju.edu.cn, lei.wangzju@zju.edu.cn). C. Jiang and K. Ding are with the Department of Automation and Intelligent Manufacturing (AIM), Southern University of Science and Technology, China (Email: 12433001@mail.sustech.edu.cn, dingkm@sustech.edu.cn). Y. Liu is with the Department of Security Research, ByteDance, China (Email: liuyang.fromthu@bytedance.com). Corresponding author: Lei Wang.}
}

   \date{}
\begin{document}
\maketitle


\begin{abstract}

This paper introduces quantization techniques to reduce the communication complexity in the distributed Nash equilibrium (NE) seeking problem, achieving an exponential reduction in bit rates over digital channels. The goal of distributed NE seeking algorithms is to coordinate agents in a network game toward equilibrium through iterative message exchanges among them via a communication network. The computational complexity of this distributed algorithm critically depends on network communication overhead in the digital channel, motivating the development of communication reduction mechanism. Regarding this, we proposed some quantizers based on sparsification and uniform quantization through a general class of ultimate-boundedness-based quantizers. Based on this, we propose a Passivity-Based NE seeking Algorithm with Time-varying scaling Error state Quantization (PBA-TEQ), and show that the linear convergence can be achieved under a sufficient condition. Moreover, when employing either the scalarization quantizer or the greedy quantizer, both belonging to  the ultimate-boundedness-based quantizers, within the PBA-TEQ framework, we establish a lower bound on communication complexity of $\log_2(\mathcal{O}(nd))$ bit rates per transmission to achieve unbiased linear convergence, with $n$ being the number of agents and $d$ being the dimension of the decision state of the network game. Numerical simulation examples are provided to validate our theoretical results.

\noindent\textbf{Keywords:} 
Nash equilibrium seeking; digital channels; network games; quantization

\end{abstract}

\section{Introduction}\label{sec.Introduction}

Game theory has wide applications in areas such as generative adversarial networks, artificial intelligence, robust/resilient control problems, power allocation and even social policy-making \cite{GT,GTC,DGTW,AGTA,OLPS,NGNG,GTPA}. In these applications, network games involving multiple agents connected through a network, such as smart grids and drone swarms, have been extensively studied. For such games, agents and their communication links are denoted by nodes and edges in a network, respectively. Each agent maintains its own decision vector and a cost function, the value of which is collectively determined by the decision vectors of all agents. In this context, centralized algorithms that rely on global observations or a central agent, such as those in \cite{DAFN} and \cite{DIRA}, become ineffective. Instead, distributed Nash equilibrium (NE) seeking algorithms, which operate under partial decision information and depend on information communication through the network, have gained significant attention.


In general, the distributed NE seeking algorithms can be divided into continuous-time and discrete-time types~\cite{surv}. The work in \cite{AAFF} studies the continuous-time leader-following consensus-based flow, in which each agent's estimate of the global decision state is updated by both network consensus and the local decision vector, as the latter is updated based on the gradient of the cost function. Another continuous-time flow is the passivity-based flow studied in \cite{APAT,OTEC}. In this flow, the estimate vector is updated based on the cost function and the network consensus, as the component corresponding to this agent in the estimate vector acts as the agent's decision vector.
For discrete-time algorithms, the first focus is on the discretization of the above two flows  \cite{FDNE}. Other algorithms based on gossip communication or ADMM are proposed in \cite{DNESA,DNESU}.
Furthermore, the distributed NE seeking problems are extended to stochastic and constrained cases in \cite{DLSG,PMFC,YP-con,Lei-sto}. 
Interested readers can refer to \cite{surv} for more information on NE seeking algorithms.

For distributed computing, communication in channels over the network among agents is of great importance for the effectiveness of algorithms. 
To address this issue, extensive research has been conducted on reducing the communication complexity. The research mainly focuses on two aspects, sparsification and quantization. Sparsifiers aim to reduce the dimension of vectors for transmission. 
Specific ones, including scalarization sparsifiers and greedy sparsifiers, have been proposed in \cite{JMLR:CONT,LW-DSFN}. When sparsifiers are used, the transmitted messages still contain high-precision data types such as doubles.   
On the other hand, the study on uniform quantization always focuses on quantizing each component of the transmitted vector, respectively \cite{DACW}. 
Several methods have been proposed to reduce the quantization error based on a time-varying boundary, weighted updating, stochastic communication, adaptive quantization, and error feedback in \cite{LCOC,AR-AEQD,CROD,FCRO,CGMW}, respectively. In addition, the research on quantization has also gradually been incorporated into the distributed NE seeking problem in \cite{PAOG,UQC}. The uniform quantizers convert the transmitted message from an analog vector to a digital form with the same dimension. Therefore, the required transmission bit rate increases linearly with the dimension of the transmitted vector. Even with the sign quantizer, which  quantizes each dimension of the transmitted vector using only one bit and has been applied to distributed algorithms \cite{XY-CCFD,LCOC}, still requires $D$ bits to transmit a $D$-dimensional vector.

In summary of the previous analysis, for sparsifiers, though the dimension of the transmitted vector is reduced, the message still contains high-precision data types. In contrast, for uniform quantizers, the dimension of the transmitted vector remains unchanged, resulting in the bit rate being linearly related to the dimension of the transmission vector. When the channel bandwidth is seriously limited and the dimension of the transmitted message is large, both sparsifiers and uniform quantizers above result in prolonged communication times, which in turn reduces the efficiency of the algorithm. In order to cope with such bandwidth-limited situations, this paper aims to introduce new quantizers to further reduce the transmitted bit rate.

In this paper, we propose new quantizers, and apply them to  distributed NE seeking algorithms for an exponential reduction in the bit rate required over the digital channels. Specifically, we first propose a general class of ultimate-boundedness-based quantizers, that inspires scalarization and greedy quantizers. Next, we present the Passivity-Based NE seeking Algorithm with Time-varying scaling Error state Quantization (PBA-TEQ). By incorporating the ultimate-boundedness-based quantizers into the PBA-TEQ framework, we derive a sufficient condition that guarantees linear convergence. Furthermore, we establish a conservative lower bound of $\log_2(\mathcal{O}(nd))$ on the bit rate required per transmission when using either scalarization or greedy quantizers, with $n$ being the number of agents and $d$ being the dimension of the decision state of the network game. We also compare the sparsifiers \cite{JMLR:CONT,LW-DSFN} and the sign quantizer \cite{XY-CCFD, LCOC} from the literature with the proposed quantizers.
{
Note that due to the presence of quantization, the convergence rate of our algorithm is slower compared with those without quantization in the literature \cite{APAT}. However, as shown by numerical simulations, the  proposed algorithm shows a significant reduction in the total bit rate required to achieve the same level of accuracy.}

This paper makes two main contributions as follows. 
\begin{itemize}

  \item A general class for ultimate-boundedness-based quantizers---that inspires two new specific quantizers, the scalarization quantizer and the greedy quantizer---is developed to reduce communication overhead over digital channels in the distributed NE seeking problem.  
    
    \item Sufficient conditions are established to achieve linear convergence when incorporating the ultimate-boundedness-based quantizers into the distributed NE seeking algorithm. Furthermore, for specific quantizers, a lower bound is derived of $\log_2(\mathcal{O}(nd))$ on the bit rate required per transmission. In comparison to the sign quantizers \cite{XY-CCFD, LCOC} requiring $nd$ bits per transmission and the unquantized sparsifiers \cite{JMLR:CONT,LW-DSFN} requiring high-precision data transmission, our results offer significant potential of reducing communication complexity in distributed algorithms.

\end{itemize}

\emph{\bf Notation.}
In this paper, $\|\cdot \|$
denotes the Euclidean norm. The notation $\mathbf{1}_{n}(\mathbf{0}_{n})$, $\mathbf{1}_{n\times d}(\mathbf{0}_{n\times d})$, $\mathbf{I}_n$ and $\{{\bf e}_1,\dots,{\bf e}_d\}$ denote the one (zero) column, the one (zero) matrix, identity matrix and base vectors in $\mathbb{R}^d$, respectively. The expression $\mathrm{blkdiag}(\xb_1,\dots,\xb_n)$ is a diagonal matrix with the $i$-th diagonal matrix being $\xb_i$. The symbol $\otimes$ denotes the Kronecker product. For a differential function, $\nabla(\cdot)$ denotes its gradient. For column vectors $\mathbf{a}$ and $\mathbf{b}$, $[\mathbf{a};\mathbf{b}]$ means $[\mathbf{a}^{\top},\mathbf{b}^{\top}]^{\top}$. The notation $\mathcal{O}(\cdot)$ means the magnitude notation.

\section{Problem Formulation}
\label{sec.pro}
\subsection{Game Theory}

In this paper, we consider a network game involving 
$n$ agents indexed by $\mathbb{V}=\{1,2,\dots, n\}$. 
In this network game, each agent 
$i\in\mathbb{V}$ selects its action, which can be represented by a 
vector 
$\xb_i\in\mathbb{R}^d$. The vector 
$\xb^0=[\xb^{\top}_1,\dots,\xb^{\top}_n]^{\top}\in\mathbb{R}^{nd}$ represents the actions of all agents. 
Each agent 
$i$ also holds a $C^2$ cost function 
$f_i:\mathbb{R}^{nd}\rightarrow \mathbb{R}$,
whose value is determined by the actions $\xb^0$ of all agents. 
The goal of each agent is to choose its best action to minimize its cost function as much as possible. 
Without loss of generality, we introduce the following assumption regarding the cost functions.

We denote this game by $\mathcal{G}(\mathbb{V},\{f_i|i\in\mathbb{V}\})$. The optimal solution in the context of this non-cooperative game is known as the NE \cite{JNASH}, which is defined as follows.

\begin{definition}\label{def:NE}
    An NE is an action profile for all agents, where no agent can reduce its cost function by unilaterally changing its action, that is,  
$\xb^\ast=[\xb^{\ast T}_1,\dots,\xb^{\ast T}_n]^{\top}\in\mathbb{R}^{nd}$
  is an NE  of the game $\mathcal{G}$ if 
  \[
  \ba
    f_i({\xb^\ast_{i}},{\xb^\ast_{-i}})\leq f_i({\xb_{i}},{\xb^\ast_{-i}}) \quad\forall  \xb_{i}\in\mathbb{R}^d,i\in \mathbb{V},
    \ea
  \]
  where $\xb^\ast_{-i}:=[{\xb^\ast_{1}};\dots,{\xb^\ast_{i-1}};{\xb^\ast_{i+1}};\dots,{\xb^\ast_{n}}]$.\hfill  
\end{definition}

Next, 
we define the gradient function
$\Hb(\xb^0):=[\frac{\partial f_1}{\partial \xb_1}(\xb^0);\dots,;\frac{\partial f_n}{\partial \xb_n}(\xb^0)]\in\mathbb{R}^{nd}$ and introduce the following assumption to ensure the existence and uniqueness of NE for game $\mathcal{G}$.

\begin{assumption}\label{ass-H1}
    $\Hb$ is strongly monotone, i.e., 
    \begin{equation}
        \label{eq:ass-H1}
        (\xb-\xb')^{\top}(\Hb(\xb)-\Hb(\xb'))\geq \mu \|\xb-\xb'\|^2,
    \end{equation}
    for any $\xb,\xb'\in\mathbb{R}^{nd}$ and some $\mu>0$ and Lipschitz continuous, i.e., $\|\Hb(\xb)-\Hb(\xb')\|\leq L_\Hb \|\xb-\xb'\|$ for any $\xb,\xb'\in\mathbb{R}^{nd}$ and some $L_\Hb>0$.\hfill  
\end{assumption}
  
According to \cite{RACM}, there exists a unique NE $\sbo\in\mathbb{R}^{nd}$ for the network game $\mathcal{G}$ if Assumption \ref{ass-H1} holds.

 In addition, we define the extended gradient as \( \Hc(\yb) := \left[ \frac{\partial f_1}{\partial \xb^1_1}(\xb^1); \dots; \frac{\partial f_n}{\partial \xb^n_n}(\xb^n) \right] \in \mathbb{R}^{nd} \) for \( \yb \in \mathbb{R}^{nnd} \). Similarly to \cite{APAT, OTEC}, we introduce the following assumption for subsequent analysis.

\begin{assumption}\label{ass-H2}
    $\Hc$ is Lipschitz continuous, i.e., 
    \begin{equation}
        \label{eq:ass-H2}
        \|\Hc(\yb)-\Hc(\yb')\|\leq L_\Hc \|\yb-\yb'\|,
    \end{equation}
 for any $\yb,\yb'\in\mathbb{R}^{nnd}$ and some $L_\Hc>0$.\hfill  
\end{assumption}

\subsection{Distributed Nash Equilibrium Seeking}
\label{sec.PBNSF}

In this paper, agents of this game will interact with each other through a communication network described by a graph $\mathrm{G}=(\mathbb{V},\mathbb{E})$, where 
$\mathbb{E}$ denotes the set of edges. Let $[a_{ij}]\in \mathbb{R}^{n\times n}$ denote the weight matrix complying with graph $\mathrm G$, i.e., $a_{ij}>0$ if $\{j,i\}\in\mathbb{E}$ and $a_{ij}=0$ if $\{j,i\}\notin\mathbb{E}$. Moreover, we use $\mathbf{L}$ to represent the Laplacian matrix of graph $\mathrm G$, which satisfy $[\Lb]_{ij}=-a_{ij}$ for all   $i\neq j$, and $[\Lb]_{ii}=\sum_{j=1}^n a_{ij}$ for all $i\in\mathbb{V}$. The set of neighboring agent for agent $i$ is denoted by $\mathbb{N}_i$, and $j\in\mathbb{N}_i$ if and only if $[\Lb]_{ij}\neq0$ for all $i,j\in\mathbb{V}$. 
We introduce the following assumption for the graph $\mathrm{G}$.
\begin{assumption}
    \label{ass:graph}
    The graph $\mathrm{G}$ is undirected, connected and time-invariant.
\end{assumption}

Note that if Assumption \ref{ass:graph} holds, the Laplacian matrix $\mathbf{L}$ is symmetric and positive semi-definite, with eigenvalues $\lambda_i$, $i\in\mathbb{V}$ in an ascending order satisfying $0=\lambda_1<\lambda_2\leq\dots\leq \lambda_n$ and $\mathbf{1}_n^{\top} \Lb=0$ by \cite{magnusbook}. 

In a distributed NE seeking algorithm, each agent $i$ holds its cost function $f_i$ and an estimate of the actions of all agents, denoted by $\xb^i := [\hat{\xb}_{1}^i; \dots; \hat{\xb}_n^i] \in \mathbb{R}^{nd}$, where $\hat{\xb}_i^i = \xb_i$ represents its actual action. The agents share their estimates over the graph and update them based on received messages and the gradients of the cost functions. The objective of distributed NE seeking algorithms is to drive the agents' action states toward the NE.


\subsection{Problem of Interest}

In the implementation of distributed NE-seeking algorithms, communication complexity is a critical factor in determining  efficiency, particularly in scenarios where the bandwidth of digital channels is severely limited. To address this issue, sparsifiers have been introduced to reduce the dimension of transmitted messages. However, they still require the transmission of high-precision data types. Uniform quantizers, including sign quantizers, on the other hand, necessitate at least $nd$ bits to communicate an $nd$-dimensional vector. Motivated by these limitations, this paper investigates a key question in distributed NE seeking over bandwidth-limited communication networks: How can one design quantizers that substantially reduce the required transmitted bit rate, thereby reducing communication complexity beyond what is achievable with sparsification or uniform quantization?

\section{Communication Quantization}
\label{qm}

In this section, we introduce a general class of quantizers, and thus proposing 
two new quantizers.

First, drawing inspiration from the ultimate boundedness theory \cite[Theorem 4.18]{Khalil(2002)}, we introduce the following general class for ultimate-boundedness-based quantizers.

\begin{definition}
\label{def:com}
The mapping $\Qb:\mathbb{R}^{D}\times \mathbb{N}_+ \to \mathbb{R}^D$ is said to be an $(u, \delta)$-{\bf ultimate-boundedness-based quantizer} if
for all $t \in \mathbb{N}_+$, any $a>0$ and $\xb,\yb\in\mathbb{R}^{D}$ satisfying $\|\xb\|\leq au$, there exists a Lyapunov function $V_e:\mathbb{R}^{D}\times \mathbb{N}_+ \to \mathbb{R}^D$ and some $\kappa_0>0$ such that
\begin{equation}
\label{eq:def2}
    \ba
    &\qquad\quad c_1\|\xb\|^2\leq V_e(\xb,t)\leq c_2\|\xb\|^2,
    \\
    &V_e\left(\xb+\kappa_0a\Qb\left(\frac{\xb}{a},t\right),t+1\right)-V_e(\xb,t)\\&\qquad\qquad\qquad\quad\ \leq -c_3\|\xb\|^2-\delta^2\|a\|^2,\\
    &V_e(\xb+\yb,t)-V_e(\xb,t)\leq c_4(\|\yb\|^2+2\|\xb\|\|\yb\|),
    \ea
\end{equation}
for some $c_1,c_2,c_3,c_4>0$.

\end{definition}

According to Definition \ref{def:com}, we know the system 
$
\xb(t+1) = \xb(t) - \kappa_0a\Qb\left(\frac{\xb}{a},t\right), t)
$
is ultimately bounded within the \(\frac{a\delta}{\sqrt{c_3}}\)-neighborhood of the zero equilibrium with attraction region \( \|\xb\|\leq au \). Together, parameters $(u,\delta)$ characterize the quantization effect of \(\Qb\) on the algorithm. Specifically, if \( u \) is smaller and \( \delta \) is larger, the error introduced by quantization tends to increase. 




Before proposing quantizers based on this framework, we introduce an intuitive quantization function. The function $q_m: \mathbb{R} \to \{0,1\}^m$, for some $m \in \mathbb{N}_+$, is an $m$-bit quantization function that represents an analog value as an $m$-bit binary number. Specifically, let $l := 2^{1-m}$ be the quantization gap. Then, $q_m$ is defined as
 \[
      \ba q_m(x)=\left\{
        \ba
             1-\frac{l}{2},\ &x>1, \\
             I,\qquad\ \ &x\in (I-\frac{l}{2},I+\frac{l}{2}],\\ & 
  I=\pm \frac{1}{2}l,\pm \frac{3}{2}l,\dots,\pm (1-\frac{1}{2}l),\\
             -1+\frac{l}{2},\ &x\leq -1.
        \ea
        \right.
        \ea
    \]
Moreover, it can be noted that
\begin{equation}
\label{eq:qm}
\ba
    |q_m(x)-x|\leq \frac{l}{2},\quad {\rm if}\ |x|\leq 1.
    \ea
\end{equation}

Using the function $q_m$ and inspired by Definition \ref{def:com}, we now propose the following two quantizers.

\begin{definition}
\label{def:sq}
    {\bf The $m$-bit scalarization quantizer} $\Qb_{\mathrm{sc}}:\mathbb{R}^D\rightarrow \{0,1\}^m$ with some $m\in\mathbb{N}_+$ satisfies $\Qb_{\mathrm{sc}}(\xb_e)=\psib(t) q_m(\psib(t)^{\top}\xb_e)$ at time $t$, where the unit vector $\psib:\mathbb{N}_+\rightarrow \mathbb{R}^D$ is uniformly bounded
and persistently excited, i.e., 
        $
    \alpha_2 \Ib_d \geq  \sum_{s=t}^{t+kD} \psib(s)\psib^{\top} (s) ds \geq \alpha_1 \Ib_d,
    $
    for all $t\geq 0$ and some constants $\alpha_1,\alpha_2,k>0$. 
    
\end{definition}  

\begin{definition}
\label{def:gq}
    {\bf The $m$-bit greedy quantizer} $\Qb_{\mathrm{gr}}:\mathbb{R}^D\rightarrow \{0,1\}^{m}$ with some $m\in\mathbb{N}_+$ satisfies $\Qb_{\mathrm{gr}}=q_m([\xb_e]_{s}){\bf e}_{s}$, where $s$ is the index of the highest coordinates in the absolute value of $\xb_e$. 

    
\end{definition}

\begin{remark}

We analyze the quantization process of $ \Qb_{\mathrm{sc}} $ and $ \Qb_{\mathrm{gr}} $. For $ \Qb_{\mathrm{sc}} $, the quantized and transmitted value is a scalar $ q_m(\psib(t)^{\top} \xb_e) $, and the receiver multiplies $ \psib(t) $ to recover $ \Qb_{\mathrm{sc}} $. For $ \Qb_{\mathrm{gr}} $, the component $ [\xb_e]_s $ is quantized and transmitted, while the other components of $ \xb_e $ are omitted (i.e., set to zero). The versions of $ \Qb_{\mathrm{sc}} $ and $ \Qb_{\mathrm{gr}} $ without quantization are discussed in \cite{LW-DSFN} and \cite{JMLR:CONT}, respectively.
\end{remark}


In addition, we introduce the sign quantizer from literature \cite{XY-CCFD, LCOC} as follows.

\begin{definition}
\label{def:signq}
{\bf The sign quantizer} $\Qb_{\mathrm{si}}: \mathbb{R}^D \to \{0, 1\}^D$ satisfies $[\Qb_{\mathrm{si}}(\xb_e)]_i = q_1([\xb_e]_i)$ for each dimension $i=1,2,...,D$.
    
\end{definition}  

\begin{remark}
The quantizer 
$\Qb_{\rm si}$
  converts the transmitted vector into a vector with each dimension being $-\frac{1}{2}$ or $\frac{1}{2}$, and requires 
$D$ bits to transmit a 
$D$-dimensional vector. When applied to quantize network communication, it has been shown that a distributed algorithm with unbiased linear convergence maintains this convergence property. Notably, though several quantizers have been established in the field of communication quantization \cite{AR-AEQD,CROD,FCRO,CGMW,XY-CCFD,LCOC,PAOG,UQC}, 
$\Qb_{\rm si}$ stands out as the quantizer that minimizes the bit rate for transmission, while still ensuring unbiased linear convergence of the algorithm.
\end{remark}

We note that the  proposed two quantizers, $\Qb_{\mathrm{\mathrm{sc}}}$ and $\Qb_{\mathrm{gr}}$, together with the sign quantizer $\Qb_{\mathrm{si}}$, share common properties, as outlined in the following result.





\begin{lemma}
\label{thm:com}
$\Qb_{\mathrm{sc}}$, $\Qb_{\mathrm{gr}}$ and $\Qb_{\mathrm{si}}$ belong to the class of ultimate-boundedness-based quantizers. 
\end{lemma}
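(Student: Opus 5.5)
The plan is to verify Definition~\ref{def:com} separately for each of the three quantizers. I will treat the quantized step $\xb\mapsto \xb-\kappa_0 a\Qb(\xb/a,t)$ as a nominal contraction plus a bounded additive disturbance of size $\mathcal{O}(a)$. Two conventions need fixing first. I use the sign convention of the ultimate-boundedness sentence after Definition~\ref{def:com}, namely $\xb(t+1)=\xb-\kappa_0 a\Qb$. I also read the $\delta^2\|a\|^2$ term as an additive slack $+\delta^2 a^2$, since that is what yields the stated ultimate bound $a\delta/\sqrt{c_3}$. A quantization error proportional to $a$ can never produce a strict decrease by $a^2$, so the literal sign cannot be what is intended. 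In every case the third condition of \eqref{eq:def2} is immediate for a quadratic $V_e(\xb,t)=\xb^\top P(t)\xb$ with $\|P(t)\|\le c_2$. Indeed,
\[
V_e(\xb+\yb,t)-V_e(\xb,t)=\yb^\top P\yb+2\xb^\top P\yb\le c_2(\|\yb\|^2+2\|\xb\|\|\yb\|),
\]
so one may take $c_4=c_2$. The real work is the decrease condition.

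\emph{Greedy quantizer.} Take $V_e=\|\xb\|^2$ and $u=1$, so that $\|\xb/a\|\le 1$ forces $|[\xb/a]_s|\le 1$. Then \eqref{eq:qm} gives $a q_m([\xb/a]_s)=[\xb]_s+a\epsilon$ with $|\epsilon|\le l/2$. Expanding $\|\xb-\kappa_0([\xb]_s+a\epsilon){\bf e}_s\|^2$ gives
\[
\|\xb\|^2-(2\kappa_0-\kappa_0^2)[\xb]_s^2+2\kappa_0(\kappa_0-1)a\epsilon[\xb]_s+\kappa_0^2a^2\epsilon^2 .
\]
Because $s$ is the largest coordinate in absolute value, $[\xb]_s^2\ge\|\xb\|^2/D$. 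I will fix $\kappa_0\in(0,2)$ and absorb the cross term by Young's inequality into half of the negative term. This yields $c_3=(2\kappa_0-\kappa_0^2)/(2D)$ and a constant $\delta^2=\mathcal{O}(l^2)$.

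\emph{Sign quantizer.} Again take $V_e=\|\xb\|^2$ and any $u>0$. Since $q_1(z)=\tfrac12\mathrm{sgn}(z)$, we have
\[
\|\xb-\kappa_0 a q_1(\xb/a)\|^2=\|\xb\|^2-\kappa_0 a\|\xb\|_1+\kappa_0^2a^2D/4 .
\]
The assumption $\|\xb\|\le au$ gives $a\ge\|\xb\|/u$, hence $\kappa_0 a\|\xb\|_1\ge\kappa_0\|\xb\|^2/u$. So $c_3=\kappa_0/u$ and $\delta^2=\kappa_0^2D/4$; here the role of the attraction region $\|\xb\|\le au$ is essential.

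\emph{Scalarization quantizer.} This is the main obstacle, because a single step only contracts along $\psib(t)$, so $\|\xb\|^2$ is not a Lyapunov function. I take $u=1$, so $|\psib^\top\xb/a|\le1$. The step then becomes $\xb^+=(\Ib-\kappa_0\psib\psib^\top)\xb-\kappa_0a\epsilon\psib$ with $|\epsilon|\le l/2$. For the nominal time-varying linear system $\xb^+=A(t)\xb$ with $A(t)=\Ib-\kappa_0\psib(t)\psib(t)^\top$ and $\kappa_0\in(0,2)$, each $A(t)$ is a symmetric contraction. Summing the per-step decrements $(2\kappa_0-\kappa_0^2)(\psib^\top\xb)^2$ over a window of length $kD$ and using persistent excitation (with the standard perturbation argument comparing $\xb(s)$ to $\xb(t)$ inside the window) gives $\|\Phi(t+kD,t)\|\le\rho<1$ uniformly in $t$. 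Hence the system is uniformly exponentially stable, and I define the converse Lyapunov function
\[
V_e(\xb,t)=\sum_{s\ge t}\|\Phi(s,t)\xb\|^2 .
\]
It is quadratic with $1\le P(t)\le c_2\Ib$, where $c_2$ depends on $\rho,kD$, and it satisfies $V_e(A(t)\xb,t+1)-V_e(\xb,t)=-\|\xb\|^2$. Adding the disturbance $-\kappa_0a\epsilon\psib$ and bounding the cross term $2\kappa_0a\epsilon\,\psib^\top P(t+1)A(t)\xb$ by Young's inequality gives the decrease condition with $c_3=1/2$ and $\delta^2=\mathcal{O}(c_2^2\kappa_0^2l^2)$. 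Finally, $c_1=1$ and $c_4=c_2$.
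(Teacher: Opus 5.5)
Your proposal is correct. Your reading of the conventions (step $\xb-\kappa_0a\Qb$, additive slack $+\delta^2a^2$) is exactly what the paper's own proof uses; as printed, the decrease condition cannot hold even at $\xb=\mathbf{0}$.

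For $\Qb_{\mathrm{gr}}$ you follow the paper: $V_e=\|\xb\|^2$ and $u=1$. The paper simply takes $\kappa_0=1$, where your cross term vanishes.

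For $\Qb_{\mathrm{sc}}$ the paper cites \cite{LW-DSFN} for stability with $\kappa_0=1/\alpha_1$. It then uses the finite-window average $\frac{1}{kD}\sum_{j=0}^{kD-1}\|\Phi_t^{t+j}\xb\|^2$, which gives explicit constants $c_2=1$, $c_3=\frac{1}{2kD}$, $\delta=\frac{\sqrt{2kD+1}}{2^m\alpha_1}$. Their $D$-dependence is transparent, and this is what Theorem~\ref{thm-C1C2} relies on. However, the claimed decrement $-\frac{1}{kD}\|\xb\|^2$ actually requires $\Phi_t^{t+kD}=0$, i.e.\ a dead-beat system such as cyclic coordinate $\psib$ with $\kappa_0=1$. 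Moreover, $\kappa_0=1/\alpha_1$ may exceed $2$, in which case $\Ib-\kappa_0\psib\psib^\top$ is no longer a contraction. Your infinite-horizon converse function, with $\kappa_0\in(0,2)$ and a self-contained persistent-excitation argument, avoids both problems and gives an exact decrement. The price is that the $D$-dependence is hidden in $c_2=\frac{kD+1}{1-\rho^2}$, and the standard perturbation estimate only guarantees $1-\rho^2\gtrsim\alpha_1/(kD)^2$. Membership in the class is unaffected, but the bit count derived downstream would become $\log_2$ of a higher-degree polynomial in $nd$.

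For $\Qb_{\mathrm{si}}$ your route is genuinely different and cleaner. The paper sets $u=\sqrt{n}$ and invokes the componentwise bound \eqref{eq:si}, which needs $\|\xb/a\|_\infty\le 1$; that is not implied by $\|\xb\|\le a\sqrt{n}$. Your identity $\xb^\top q_1(\xb/a)=\frac12\|\xb\|_1$, combined with $a\ge\|\xb\|/u$, covers the whole ball for every $u$. There is also a side benefit. The $\beta$ of Theorem~\ref{thm-TSC} scales like $c_3=\kappa_0/u$, so the right side of \eqref{eq:uanddelta_O_INI} grows only like $\sqrt{u}$, while your $u/\delta$ grows like $u$. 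Your family of constants can therefore, in principle, meet that condition for large $u$. The paper's pair has $u/\delta=\sqrt{4/5}<1$, which is why Theorem~\ref{lem-Q3} needs the separate $\infty$-norm induction of Appendix~\ref{appc}.
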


The proof of Lemma \ref{thm:com} can be seen in Appendix \ref{app:pro}.






\color{black}
\section{Main Results}
\label{sec.mai}

In this section, we propose an algorithm to address the distributed NE seeking problem with quantized communication,   based on the passivity-based NE seeking algorithm~\cite{APAT}.
{
In this algorithm, the distributed Nash equilibrium seeking problem is reformulated as a consensus problem of the agents' estimates, and each agent updates its decision based on these estimates, referring to as passivity.} 
To reduce the quantization error, inspired by \cite{LCOC}, we quantize the error state and introduce a decaying coefficient \( \gamma_t \) to scale the transmitted value before and after quantization, as illustrated in Fig.\ref{fig:illu0}. This results in the Passivity-Based NE seeking Algorithm with Time-varying scaling Error state Quantization (PBA-TEQ).

\begin{figure}
    \centering\includegraphics[width=1\linewidth]{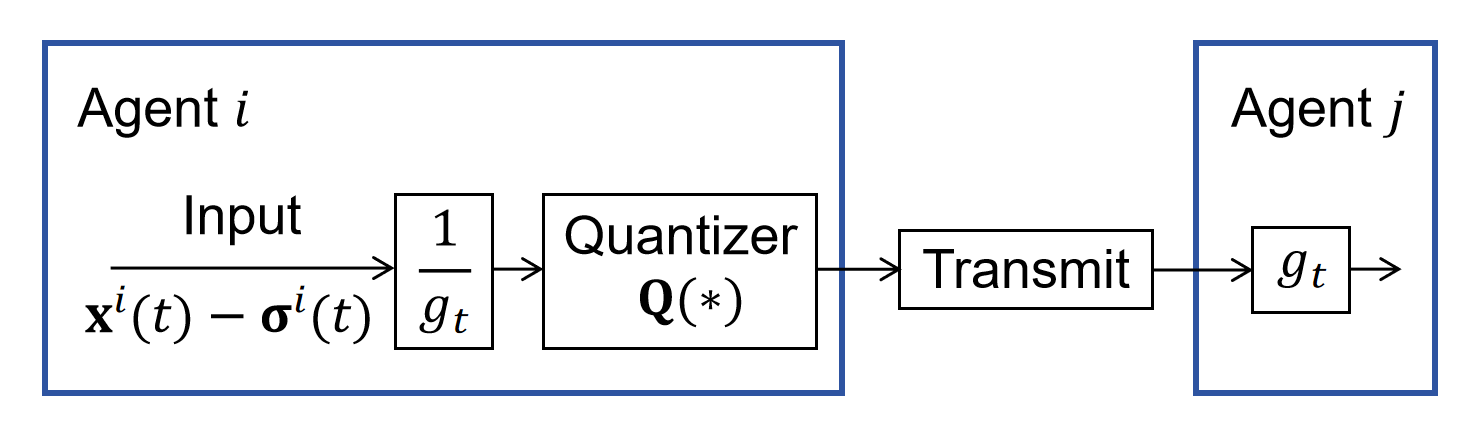}
    \caption{The illustration of time-varying scaling quantization.}
    \label{fig:illu0}
\end{figure}

\begin{algorithm}

\caption{The Passivity-Based NE seeking Algorithm with Time-varying scaling Error state Quantization (PBA-TEQ)}
\label{alg}
\begin{algorithmic}
\State {\bf Initialization}: $\kappa_0,\alpha,\eta,g_0,\gamma>0$, $ \sigmab^i(0)=\zb^i(0)=\xb^i(0)=\mathbf{0}_d, i\in\mathbb{V}$.

\For{$t \in \mathbb{N}_+$, each agent $i$}
\State Calculate $g_t=g_0{\gamma^t}$ 
     \State Transmit $\qb^i(t)=\Qb\left(\frac{\xb^i(t)-\sigmab^i(t)}{g_t},t\right)$ to neighbors
    \State Update:
    \begin{equation}
    \label{eq:PBA-TEQ}
        \ba{\sigmab}^i(t+1)=&{\sigmab}^i(t)+\kappa_0g_t\qb^i(t),\\
{\zb}^i({t+1})=&\zb^i(t)\\&+\kappa_0g_t\left[\qb^i(t)+\sum_{j\in\mathbb{N}_i}a_{ij}\left(\qb^j(t)-\qb^i(t)\right)\right],\\
{\xb}^i({t+1})=&\xb^i(t)\\&- \left[\alpha(\sigmab^i(t)-\zb^i(t))+\eta\Tb_i \nabla_i f_i(\xb^i(t))\right].
        \ea
    \end{equation}
\EndFor
\end{algorithmic}
\end{algorithm}

In Algorithm \ref{alg}, the vector \( \yb := [\xb^{1}; \dots; \xb^{n}] \in \mathbb{R}^{nnd} \) represents the estimate vector of all agents, the parameters \( g_0, \gamma,\alpha, \eta > 0 \) are to be determined, the matrix \( \Tb_i := [\mathbf{0}_{d \times (i-1)d};\mathbf{I}_d; \mathbf{0}_{d \times (n-i)d}] \in \mathbb{R}^{nd \times d} \) is a selection matrix, and \( \nabla_i f_i(\xb^i) := \frac{\partial f_i}{\partial \xb^i_i}(\xb^i) \in \mathbb{R}^d \). It should be noted that only the quantized values \( \Qb\left( \frac{\xb^i(t) - \sigmab^i(t)}{g_t}, t \right) \) are transmitted over the network.

Next, we integrate the quantizers into Algorithm \ref{alg} and perform a convergence analysis. Moreover, we establish the required bit rate per transmission for different quantizers.

\subsection{Convergence Analysis for the Ultimate-Boundedness-Based Quantizers}
\label{sec.conana}

First, for the general ultimate-boundedness-based quantizers, which encompass quantizers $\Qb_{\mathrm{sc}},\Qb_{\mathrm{gr}},\Qb_{\mathrm{si}}$, we have the following theorem.

\begin{theorem}
\label{thm-TSC}    
Consider the game $\mathcal{G}$ over the communication graph $\mG$ and assume that Assumptions \ref{ass-H1}, \ref{ass-H2} and \ref{ass:graph} hold. In addition, suppose $\Qb$ is $(u,\delta)$-ultimate-boundedness-based and satisfies
\begin{equation}
    \label{eq:uanddelta_O_INI}
 \frac{u}{\delta}\geq\sqrt{\frac{1+\epsilon}{\beta}}.
\end{equation}
for any $\epsilon>0$ and $\beta = \mathrm{min}\left\{\frac{\lambda_2\alpha}{6},\frac{\eta\mu}{4n},\frac{c_3}{4c_2 }\right\} \in (0, 1)$. Under these conditions, there exist constants $\alpha$, $\eta$, $g_0$ and $\gamma=\sqrt{\frac{1-\beta}{1-\beta/(1+\epsilon)}}\in(0,1)$ such that the global action of each agent $\xb^i(t)$ generated by Algorithm \ref{alg}, converges linearly to the NE of the game $\mathcal{G}$, i.e. ,
\[
\|\xb^i(t) - \xb^\ast\|^2 = \mathcal{O}(\gamma^{2t}).
\]
(see the parameters $c_3, c_2, \alpha, \eta, g_0$ in Appendix \ref{app:thm2}.)

\end{theorem}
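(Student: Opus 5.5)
We prove the theorem with a composite Lyapunov argument. The two pieces are the optimization/consensus error of the PBA-TEQ iterates and the scaled quantization error. Closing the loop requires an induction showing that the quantizer input always stays inside its attraction region $\|\cdot\|\le au$.

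\textbf{Reformulation.} Stack the iterates as $\yb(t)=[\xb^1(t);\dots;\xb^n(t)]$, $\sigmab(t)$ and $\zb(t)$, and define the quantization error $\mathbf{e}(t):=\yb(t)-\sigmab(t)$. The receivers can reconstruct $\sum_j a_{ij}(\sigmab^j-\sigmab^i)$ from the transmitted $\qb^j$. Taking $\zb^i(0)=\sigmab^i(0)$, induction on \eqref{eq:PBA-TEQ} gives $\zb(t)=\sigmab(t)-(\Lb\otimes\Ib_{nd})\sigmab(t)$. Hence $\sigmab^i-\zb^i=[(\Lb\otimes \Ib)\sigmab]_i$, and the $\xb$-update reads
\[
\yb(t+1)=\yb(t)-\alpha(\Lb\otimes\Ib)\yb(t)-\eta\,\Tb\,\Hc(\yb(t))+\alpha(\Lb\otimes\Ib)\mathbf{e}(t).
\]
This is the discretized passivity-based flow of \cite{APAT}, driven by the perturbation $\alpha(\Lb\otimes\Ib)\mathbf{e}(t)$, where $\Tb=\mathrm{blkdiag}(\Tb_1,\dots,\Tb_n)$.

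\textbf{Step 1: Lyapunov decrease of the unquantized part.} Decompose $\yl:=\yb-\mathbf{1}_n\otimes\sbo$ into its consensus component $\ypi$ and its disagreement component $\ych$. Using Assumptions \ref{ass-H1}--\ref{ass:graph} and the standard quadratic-form argument of \cite{APAT}, for small enough $\alpha$ and $\eta$ we obtain
\[
\|\yl(t+1)\|^2\le(1-2\beta)\|\yl(t)\|^2+C_1\|\mathbf{e}(t)\|^2 .
\]
The rate $\beta$ comes from the two sources of contraction: $\lambda_2\alpha$ on $\ych$ and $\eta\mu/n$ on $\ypi$. Young's inequality absorbs the cross terms, which is why the constants $6$ and $4$ appear in $\beta$.

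\textbf{Step 2: the quantization error.} We have
\[
\mathbf{e}(t+1)=\mathbf{e}(t)-\kappa_0g_t\Qb(\mathbf{e}(t)/g_t,t)+(\yb(t+1)-\yb(t)).
\]
Apply Definition \ref{def:com} with $a=g_t$: the second inequality handles the quantizer step, and the third handles the increment $\yb(t+1)-\yb(t)$, whose norm is $O(\|\yl(t)\|+\|\mathbf{e}(t)\|)$. This yields
\[
V_e(\mathbf{e}(t+1),t+1)\le V_e(\mathbf{e}(t),t)-c_3\|\mathbf{e}(t)\|^2-\delta^2 g_t^2+c_4(\cdots).
\]
The sign convention in Definition \ref{def:com} has to be matched to this update; we will treat that as notational. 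Then form the composite function
\[
W(t)=\|\yl(t)\|^2+\rho V_e(\mathbf{e}(t),t)
\]
with a small weight $\rho$, and choose $\alpha$ and $\eta$ small relative to $c_3/c_4$ so that the cross terms are dominated. The target inequality is
\[
W(t+1)\le(1-\beta)W(t)+\rho\,\delta^2 g_t^2 ,
\]
where the factor $c_3/(4c_2)$ in $\beta$ comes from $V_e\le c_2\|\mathbf{e}\|^2$. Since $g_t^2=g_0^2\gamma^{2t}$ and $\gamma^2>1-\beta$, unrolling gives
\[
W(t)\le(1-\beta)^tW(0)+\frac{\rho\,\delta^2g_0^2\gamma^{2t}}{\gamma^2-(1-\beta)}.
\]
The specific $\gamma$ in the statement makes $\gamma^2-(1-\beta)=\gamma^2\beta/(1+\epsilon)$, so $W(t)\le K\gamma^{2t}$ with $K$ of order $(1+\epsilon)\delta^2g_0^2/\beta$ plus the initial term.

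\textbf{Step 3: staying in the attraction region (main obstacle).} Step 2 requires $\|\mathbf{e}(t)\|\le g_tu$ at every $t$. We argue by induction. Suppose the bound holds up to time $t$, so that $W(t+1)\le K\gamma^{2(t+1)}$. Then
\[
\|\mathbf{e}(t+1)\|^2\le W(t+1)/(\rho c_1)\le \frac{K}{\rho c_1}\,g_{t+1}^2/g_0^2 ,
\]
and it suffices that $K/(\rho c_1 g_0^2)\le u^2$. Up to the constants $c_1,c_2$ and the initial term, $K/(\rho g_0^2)$ is $\delta^2(1+\epsilon)/\beta$. This is exactly where condition \eqref{eq:uanddelta_O_INI}, $u^2\ge\delta^2(1+\epsilon)/\beta$, enters. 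The initial term is handled by taking $g_0$ large, since all iterates start at zero and $\|\yl(0)\|=\sqrt n\|\sbo\|$. Tracking the constants so that the ratio $c_2/c_1$ and the weight $\rho$ disappear, or are folded into $c_3$ and $\beta$ as the paper's appendix parameter choices presumably arrange, is the delicate bookkeeping. I would first try normalizing $V_e$ so that $c_1$, $\rho$ and the relevant constants cancel. Once the induction closes, $\|\xb^i(t)-\sbo\|^2\le\|\yl(t)\|^2\le W(t)=\mathcal O(\gamma^{2t})$.
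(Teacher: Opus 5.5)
Your architecture is the paper's. It has the identity $\sigmab^i-\zb^i=[\Lc\sigmab]_i$ and a composite Lyapunov function (estimation error plus $\sum_i V_e(\xb^i-\sigmab^i,t)$). It also has the recursion $V_{t+1}\le(1-\beta)V_t+\delta^2g_t^2$, unrolled with $g_t=g_0\gamma^t$, and an induction keeping $\|\yb_t-\sigmab_t\|\le g_t u$.

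The paper weights $\frac12\|\ych\|^2+\frac p2\|\ypi\|^2$ with $p=12L_\Hc^2n/\mu$ and puts weight $1$ on $V_3$. It takes $\alpha$ small with $\eta\le\alpha$, and those choices plus its Young splits produce the $6$ and $4$ in $\beta$. Your unweighted $\|\yl\|^2+\rho V_e$ also works, since an asymmetric Young split handles the $\ych$--$\ypi$ cross term. You would still have to verify that its rate is at least the stated $\beta$.

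The genuine gap is Step 3, which you leave open, and normalizing $V_e$ will not close it.

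\textbf{The weight $\rho$ is harmless.} Since $W\ge\rho c_1\|\mathbf e\|^2$ and the forcing term is $\rho\delta^2g_t^2$, keep the $-(1-\beta)^t$ term you dropped when unrolling. This gives
\[
\frac{\|\mathbf e(t)\|^2}{g_t^2}\le\frac{1}{\rho c_1}\Big[\frac{W(0)}{g_0^2}x^t+\frac{\rho\delta^2}{\gamma^2-(1-\beta)}\big(1-x^t\big)\Big],\qquad x=\frac{1-\beta}{\gamma^2}.
\]
Once $g_0^2\ge W(0)(\gamma^2-(1-\beta))/(\rho\delta^2)$, the bracket is a convex combination, so the transient costs nothing. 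This is what the paper's choice of $g_0$ accomplishes.

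\textbf{The constant $c_1$ cannot be normalized away.} Replacing $V_e$ by $sV_e$ turns $(c_1,\delta^2)$ into $(sc_1,s\delta^2)$, so $\delta^2/c_1$ is invariant. What your induction actually needs is
\[
u^2\ge\frac{\delta^2}{c_1(\gamma^2-(1-\beta))}=\frac{(1+\epsilon)\delta^2}{c_1\gamma^2\beta}.
\]
The paper reaches the stated condition only through two steps:
\begin{enumerate}
\item It writes $\|\yl_t-\sil_t\|^2\le V_t$, which tacitly uses $c_1\ge1$. That holds for $\Qb_{\rm gr}$ and $\Qb_{\rm si}$, but fails for $\Qb_{\rm sc}$, where $c_1=1/(kD)$.
\item Its closed form for the geometric sum carries an extra factor $1-\beta$. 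That is why it gets $\delta^2((1+\epsilon)/\beta-1)$ instead of your correct $(1+\epsilon)\delta^2/(\gamma^2\beta)$.
\end{enumerate}

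So do not try to reproduce the stated hypothesis exactly. Either assume $c_1\ge1$ or put $1/c_1$ into the hypothesis. Then leave slack in $\gamma$, for example by using $\epsilon'=\epsilon(1-\beta)$ in its definition. That choice gives exactly $\delta^2/(\gamma^2-(1-\beta))=(1+\epsilon)\delta^2/\beta$, and your induction closes.
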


The proof of Theorem \ref{thm-TSC} can be seen in Appendix \ref{app:thm2}.
Theorem \ref{thm-TSC} demonstrates the convergence performance when the general ultimate-boundedness-based quantizers are incorporated into Algorithm \ref{alg}. When 
$\frac{u}{\delta}$
  of the quantizer is sufficiently large, the algorithm can converge to the NE of the game at a linear convergence rate.

\subsection{Provable Bit Rates for Sign Quantizer}
For the sign quantizer \( \Qb_{\mathrm{si}} \) in Definition \ref{def:signq}, we have the following theorem.

\begin{theorem}
\label{lem-Q3}    
Consider the game \( \mathcal{G} \) over the communication graph \( \mG \) and assume Assumptions \ref{ass-H1}, \ref{ass-H2} and \ref{ass:graph} hold. Then, if \( \Qb = \Qb_{\mathrm{si}} \), there exist some constants  $\alpha$, $\eta$, $g_0$, $\gamma>0$ such that the global action of each agent $\xb^i(t)$ generated by Algorithm \ref{alg}, converges linearly to the NE of the game $\mathcal{G}$.
\end{theorem}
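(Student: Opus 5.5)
The plan is to obtain Theorem \ref{lem-Q3} as a corollary of Theorem \ref{thm-TSC}. By Lemma \ref{thm:com}, $\Qb_{\mathrm{si}}$ is a $(u,\delta)$-ultimate-boundedness-based quantizer, so the whole proof reduces to checking the ratio condition \eqref{eq:uanddelta_O_INI}. Since $\beta$ in Theorem \ref{thm-TSC} depends on $\alpha,\eta$ and on the Lyapunov constants $c_2,c_3$, we must show that, for the sign quantizer, the pair $(u,\delta)$ and the constants $c_1,\dots,c_4$ can be arranged so that $u/\delta\ge\sqrt{(1+\epsilon)/\beta}$ for some $\epsilon>0$.

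\textbf{Step 1: explicit ultimate-boundedness data for $\Qb_{\mathrm{si}}$.} I would take $V_e(\xb,t)=\|\xb\|^2$, so $c_1=c_2=1$ and $c_4=1$, since $\|\xb+\yb\|^2-\|\xb\|^2\le\|\yb\|^2+2\|\xb\|\|\yb\|$. For the decrease condition, work componentwise with $q_1(x)=\tfrac12\mathrm{sign}(x)$ (and $-\tfrac12$ at $x=0$). For a scalar $x$ with $|x|\le au$, the update $x-\kappa_0 a\,q_1(x/a)$ gives
\[
(x-\tfrac{\kappa_0 a}{2}\mathrm{sign}(x))^2-x^2=-\kappa_0 a|x|+\tfrac{\kappa_0^2a^2}{4}.
\]
The sign in Definition \ref{def:com} is read with the update subtracting the quantized value, as in the remark after that definition. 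The term $-\kappa_0 a|x|$ is not quadratic in $x$, so I use $|x|\le au$, which gives $a|x|\ge x^2/u$. Summing over the $D$ coordinates yields
\[
\Delta V_e\le-\frac{\kappa_0}{u}\|\xb\|^2+\frac{D\kappa_0^2a^2}{4}.
\]
Matching this to Definition \ref{def:com} as printed, $c_3=\kappa_0/u$ and $\delta^2=D\kappa_0^2/4$. The offset term $\delta^2a^2$ then enters with a plus sign: it is the residual error, which is how it is used in Theorem \ref{thm-TSC}, where the ultimate bound is $a\delta/\sqrt{c_3}$.

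\textbf{Step 2: scale freedom.} The key observation is that $u$ is a free parameter, because the decrease inequality holds on the region $\|\xb\|\le au$ for every $u$. Choosing $u$ and $\kappa_0$ gives
\[
\frac{u^2}{\delta^2}=\frac{4u^2}{D\kappa_0^2},\qquad \frac{c_3}{4c_2}=\frac{\kappa_0}{4u}.
\]
Take $\kappa_0=u/M$ for a large constant $M$, making $c_3/(4c_2)=1/(4M)$ a fixed constant. Then $u^2/\delta^2=4M^2/D$, while $\beta\ge\min\{\lambda_2\alpha/6,\ \eta\mu/(4n),\ 1/(4M)\}$. Fix $\alpha,\eta$ as in the appendix for Theorem \ref{thm-TSC}; they are independent of $M$. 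For large $M$, $\beta=1/(4M)$, so the required condition $4M^2/D\ge 4M(1+\epsilon)$ holds once $M\ge D(1+\epsilon)$. Theorem \ref{thm-TSC} then supplies $g_0$ and $\gamma=\sqrt{(1-\beta)/(1-\beta/(1+\epsilon))}<1$, giving $\|\xb^i(t)-\xb^\ast\|^2=\mathcal{O}(\gamma^{2t})$.

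\textbf{Main obstacle.} The delicate step is checking that the constants chosen in the proof of Theorem \ref{thm-TSC} are compatible with this scaling. Specifically, $\alpha,\eta,g_0$ may depend on $\kappa_0$, $c_4$ or $u$, and the initial error $\|\xb^i(0)-\sigmab^i(0)\|/g_0$ must lie in the attraction region $\|\cdot\|\le u$. I would first try to verify that $\alpha,\eta$ depend only on $L_\Hb,L_\Hc,\mu,\lambda_2,\lambda_n$, and that the invariance argument $\|\xb^i(t)-\sigmab^i(t)\|\le g_t u$ only requires $g_0$ large. Since the initialization is zero, the initial error is zero, which helps. If $\alpha$ does depend on $\kappa_0$ (for instance through a stability bound $\alpha\kappa_0\lambda_n<1$), the fallback is to keep $\kappa_0$ fixed and enlarge $u$ alone. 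That makes $c_3\sim 1/u$ small, so $\beta\sim 1/u$ while $u^2/\delta^2\sim u^2$, and the condition $u^2\gtrsim 1/\beta\sim u$ still holds for large $u$. Linear convergence then follows at a slower rate.
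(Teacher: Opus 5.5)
Your argument is correct, but it takes a different route from the paper. The paper never checks \eqref{eq:uanddelta_O_INI} for $\Qb_{\mathrm{si}}$. With the constants of Lemma~\ref{thm:com} ($c_3=1$, $u/\delta=\sqrt{4/5}<1$) that condition fails, so Theorem~\ref{thm-TSC} cannot be invoked directly. Instead, the paper works on the box $\|(\yl_t-\sil_t)/g_t\|_\infty\le 1$, on which the sign quantizer has coordinatewise error at most $\tfrac12$. It then proves by a separate induction that this box is invariant. The next scaled error is bounded by $\frac{1}{2\gamma}$ plus a drift term of order $\alpha\sqrt{V_0}/(g_0\gamma)$. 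This drift is made smaller than $1-\frac{1}{2\gamma}$ by choosing $\gamma>\max\{\tfrac12,\sqrt{1-\beta}\}$ and $\alpha$ small, with $\kappa_0=1$ taken implicitly.

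You instead exploit the scale invariance of $\Qb_{\mathrm{si}}$. The exact per-coordinate change $-\kappa_0 a|x_i|+\kappa_0^2a^2/4$, combined with $|x_i|\le au$, gives a certificate for every $u>0$ with $c_1=c_2=c_4=1$, $c_3=\kappa_0/u$ and $\delta^2=D\kappa_0^2/4$, where $D=nd$. Hence $u^2/\delta^2=4/(Dc_3^2)$ grows quadratically in $1/c_3$, while $(1+\epsilon)/\beta$ grows only linearly. This makes Theorem~\ref{lem-Q3} a genuine corollary of Theorem~\ref{thm-TSC}. It needs only Euclidean-norm invariance and allows arbitrary $\kappa_0$. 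The cost is that the ratio condition forces $c_3$ to be of order $1/(nd)$, hence $\beta=O(1/(nd))$, so your guaranteed rate $\gamma$ degrades explicitly with dimension.

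One correction to Step 2. In the proof of Theorem~\ref{thm-TSC}, $\alpha=c_3\min\{r/(4\xi_4),1/(4\xi_5)\}$ and $\eta\propto\alpha$, so with $c_3=1/M$ these are not independent of $M$. All three entries of $\beta$ scale like $1/M$, giving $\beta=C/M$ for large $M$ with a constant $C\le\tfrac14$, rather than $\beta=1/(4M)$. Since $u^2/\delta^2=4M^2/D$, the condition still holds once $M\ge(1+\epsilon)D/(4C)$; only your threshold $M\ge D(1+\epsilon)$ changes. The dependence you feared, of $\alpha$ on $\kappa_0$, does not arise, because $\kappa_0$ enters that proof only through $(c_3,\delta)$. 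Your fallback reasoning is the right one. In fact both of your scalings express the same requirement, namely that $c_3$ be small.
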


The proof of Theorem \ref{lem-Q3} can be seen in Appendix \ref{appc}. With Definition \ref{def:signq}, we know Theorem \ref{lem-Q3} indicates that the bit rate required in each transmission for $\Qb_{si}$ is $nd$.

\subsection{Provable Bit Rates: Exponential Reduction}
For the \( m \)-bit scalarization quantizer \( \Qb_{\mathrm{sc}} \) in Definition \ref{def:sq} and the \( m \)-bit greedy quantizer \( \Qb_{\mathrm{gr}} \) in Definition \ref{def:gq}, we have the following theorem.

\begin{theorem}
\label{thm-C1C2}    
Consider the game \( \mathcal{G} \) over the communication graph \( \mG \) and assume Assumptions \ref{ass-H1}, \ref{ass-H2} and \ref{ass:graph} hold. Then, if \( \Qb\) is the $m$-bit quantizer \(\Qb_{\mathrm{sc}}\) or \(  \Qb_{\mathrm{gr}} \) with \( m \geq \underline{m} = \log_2(\mathcal{O}(nd)) \), there exists some constant  $\alpha$, $\eta$, $g_0$, $\gamma>0$ such that the global action of each agent $\xb^i(t)$ generated by Algorithm \ref{alg}, converges linearly to the NE of the game $\mathcal{G}$.
\hfill  
\end{theorem}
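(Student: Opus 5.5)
The plan is to reduce Theorem \ref{thm-C1C2} to Theorem \ref{thm-TSC}. By Lemma \ref{thm:com}, both $\Qb_{\mathrm{sc}}$ and $\Qb_{\mathrm{gr}}$ are $(u,\delta)$-ultimate-boundedness-based quantizers. It therefore suffices to compute explicit admissible pairs $(u,\delta)$, together with the constants $c_2,c_3$ of Definition \ref{def:com}, as functions of $m$ and $D=n^2d$ (the dimension of the transmitted vector $\xb^i-\sigmab^i$ is $nd$ per agent, but the stacked analysis lives in $\mathbb{R}^{nnd}$). We then check that condition \eqref{eq:uanddelta_O_INI} holds once $m\ge \log_2(\mathcal{O}(nd))$. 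The key observation is that $u$ is tied to the saturation level of $q_m$, whereas $\delta$ is driven by the quantization gap $l=2^{1-m}$. Hence $u/\delta$ grows like $2^{m}$ divided by a polynomial factor in the dimension, and taking a logarithm gives the claimed bit rate.

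First, for $\Qb_{\mathrm{gr}}$, I will use $V_e(\xb,t)=\|\xb\|^2$, so that $c_1=c_2=1$ and $c_4=1$. If $\|\xb\|\le au$ with $u\le 1$, then $|[\xb/a]_s|\le 1$, and by \eqref{eq:qm} the quantization error is at most $l/2$. With $s$ the largest coordinate, $|[\xb]_s|^2\ge \|\xb\|^2/D$. Expanding $\|\xb-\kappa_0 a q_m([\xb/a]_s)\mathbf e_s\|^2$ (using the sign convention of the algorithm's error dynamics) with $\kappa_0\in(0,1]$ gives a decrease of order $-\frac{\kappa_0}{D}\|\xb\|^2$ plus a cross term of order $\kappa_0 a l\|\xb\|\le \kappa_0 a^2 l u$. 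Splitting this cross term by Young's inequality yields $c_3=\mathcal{O}(1/D)$ and an additive term $\mathcal{O}(D l^2 a^2)$. This lets me take $u=1$ and $\delta^2=\mathcal{O}(Dl^2)$, up to the sign convention in the definition, which I will match by absorbing the additive term.

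Second, for $\Qb_{\mathrm{sc}}$, a one-step quadratic function only decreases along $\psib(t)$. I will therefore use the time-varying Lyapunov function built from the persistent-excitation window, in the spirit of the proof of Lemma \ref{thm:com}. Its constants satisfy $c_3/c_2=\mathcal{O}(\alpha_1/(kD))$, and the error injected per step is again $\mathcal{O}(l a)$, accumulated over a window of length $kD$. This gives $u=\mathcal{O}(1)$ (to keep $|\psib^\top \xb/a|\le 1$) and $\delta^2=\mathcal{O}(\mathrm{poly}(D)\,l^2)$. I will reuse the constants from the appendix proof of Lemma \ref{thm:com} rather than recompute them.

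Third, I will plug these constants into Theorem \ref{thm-TSC}. Because $c_3/(4c_2)=\mathcal{O}(1/D)$ and the other two terms in $\beta$ can be made comparable by choosing $\alpha,\eta$, we have $\beta^{-1}=\mathcal{O}(\mathrm{poly}(nd))$. Condition \eqref{eq:uanddelta_O_INI} then reads $2^{m-1}\ge \mathcal{O}(\mathrm{poly}(nd))\sqrt{1+\epsilon}$, which holds for $m\ge \log_2(\mathcal{O}(nd))$. A polynomial of $nd$ inside $\mathcal{O}(\cdot)$ is again absorbed into the logarithmic expression, since $\log_2(\mathrm{poly}(nd))=\mathcal{O}(\log_2(nd))$. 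Theorem \ref{thm-TSC} then supplies $\alpha,\eta,g_0,\gamma$ and the linear rate.

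The main obstacle is the $\Qb_{\mathrm{sc}}$ case. There I must track how $c_2$, $c_3$ and $\delta$ depend on $D$ through the persistent-excitation window, and make sure that the accumulated quantization error over $kD$ steps does not break the requirement $\|\xb\|\le au$ inside the window. A further point is that $\beta$ also involves $\lambda_2$ and $\mu$. These must be treated as $\mathcal{O}(1)$ constants, or folded into the $\mathcal{O}(nd)$, for the stated bound to be meaningful.
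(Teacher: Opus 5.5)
Your route is the same as the paper's: read off $(u,\delta)$, $c_2$, $c_3$ from Lemma~\ref{thm:com}, substitute them into \eqref{eq:uanddelta_O_INI}, and invoke Theorem~\ref{thm-TSC}. As written, however, the last step does not give the stated threshold.

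First, the dimension is misidentified. Each agent applies $\Qb$ to $(\xb^i(t)-\sigmab^i(t))/g_t\in\mathbb{R}^{nd}$, and $V_3=\sum_i V_e(\xb^i-\sigmab^i,t)$ is a sum of per-agent functions on $\mathbb{R}^{nd}$. So Lemma~\ref{thm:com} must be used with $D=nd$. Taking $D=n^2d$ describes a different quantizer: for $\Qb_{\mathrm{gr}}$ it would pick one coordinate of the whole stacked vector rather than one per agent. Even with exact bookkeeping it would only give $\log_2(\mathcal{O}(n^2d))$.

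Second, and more seriously, $m\ge\log_2(\mathcal{O}(nd))$ means $2^m\ge C\,nd$, i.e.\ $m\ge\log_2(nd)+\mathcal{O}(1)$. This is strictly sharper than $m=\mathcal{O}(\log_2(nd))$. Your bounds only yield the requirement $2^m\ge\mathrm{poly}(nd)$, and, e.g., $2^m\ge C'(nd)^2$ is not implied by $2^m\ge C\,nd$. The sentence ``a polynomial of $nd$ is absorbed into the logarithm'' conflates $\mathcal{O}(\log_2(nd))$ with $\log_2(\mathcal{O}(nd))$, so the conclusion does not follow.

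The missing ingredient is to track the power of $D$ exactly; it is one in each factor. Lemma~\ref{thm:com} gives $u=1$, with $\delta=\sqrt{2kD+1}/(2^m\alpha_1)$ for $\Qb_{\mathrm{sc}}$ and $\delta=\sqrt{2D+1}/2^m$ for $\Qb_{\mathrm{gr}}$. Hence $\delta^2/u^2$ is linear in $D$ times $4^{-m}$, not a higher polynomial. For $\Qb_{\mathrm{sc}}$ the factor $D$ enters only through Young's inequality with weight $1/c_3=2kD$ in the one-step perturbation bound, which uses $\|\Phi_t^{t+j}\xb\|\le\|\xb\|$. Nothing accumulates over the excitation window.

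So your ``main obstacle'' is not one. $V_e$ is built from the unquantized transition matrices, and the decrease in Definition~\ref{def:com} is a one-step property. The constraint $\|\xb\|\le au$ is needed only at the current step, and the invariance argument in the proof of Theorem~\ref{thm-TSC} supplies it for every $t$.

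Next, in that proof $\alpha$ and $\eta$ are chosen proportional to $c_3$. So $1/\beta$ is of order $1/c_3=\mathcal{O}(nd)$, provided the game and graph quantities in the other two entries of the minimum are regarded as fixed. This is your caveat about $\lambda_2,\mu$, and the paper adopts the same convention when it replaces $1/\beta$ by $2knd$ (resp.\ $2nd$). Condition \eqref{eq:uanddelta_O_INI} then becomes $2^m\ge\frac{1}{\alpha_1}\sqrt{(1+\epsilon)\big((2knd)^2+2knd\big)}$ (resp.\ $2^m\ge\sqrt{(1+\epsilon)\big((2nd)^2+2nd\big)}$). That is exactly $m\ge\log_2(\mathcal{O}(nd))$, as in the paper.
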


The proof of Theorem \ref{thm-C1C2} can be seen in Appendix \ref{pr:Cor1}.

According to Theorems \ref{lem-Q3} and \ref{thm-C1C2}, we can establish the conservative bit rate required for different quantizers. Specifically, the linear bit rate for \( \Qb_{\mathrm{si}} \) and the logarithmic bit rate for \( \Qb_{\mathrm{sc}}, \Qb_{\mathrm{gr}} \), with respect to \( nd \), the dimension of the transmitted vector, are obtained. To illustrate, the required bit rate for each quantizer is shown in Table \ref{tab:illu} and Fig.\ref{fig:illu}.

\begin{table}[http]
    \centering
     \caption{The bit rates for different quantizers.}
    \begin{tabular}{c||c|c} Quantizer&$\Qb_{\mathrm{si}}$&$\Qb_{\mathrm{sc}},\Qb_{\mathrm{gr}}$\\ \hline Bit Rate& $nd$ & $\log_2(\mathcal{O}(nd))$
   \end{tabular}  
    \label{tab:illu}
\end{table}

\begin{figure}
    \centering\includegraphics[width=0.8\linewidth]{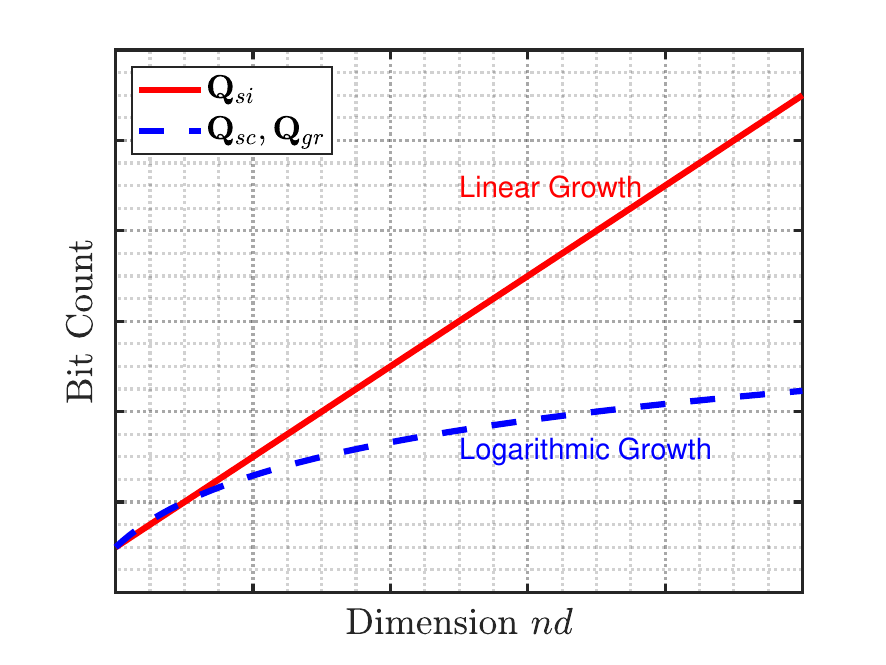}
    \caption{The bit rates for different quantizers with respect to dimension $nd$.}
    \label{fig:illu}
\end{figure}

Compared to the linear bit rate required for \( \Qb_{\mathrm{si}} \) and similar results in the existing literature \cite{XY-CCFD, LCOC}, the quantizers \( \Qb_{\mathrm{sc}} \) and \( \Qb_{\mathrm{gr}} \) reduce the bit rate in each transmission over the channels \textit{exponentially}, i.e., from linear to logarithmic, with linear convergence guaranteed. Meanwhile, compared with their unquantized sparsifier version, $\Qb_{\rm si}$ and $\Qb_{\rm gr}$ quantize high-precision data using $m$ bits, whose lower bound is given in theory. Therefore, compared with uniform quantizers and sparsifiers,  the significant reduction in bit rates established in Theorem \ref{thm-C1C2} plays a crucial role in alleviating the communication complexity in distributed NE seeking algorithms.


\section{Numerical Simulations}

In this section, we evaluate the performance of Algorithm \ref{alg} on three types of synthetic networks generated using the Erd\H{o}s–R\'{e}nyi (ER), Watts–Strogatz (WS), and Barab\'{a}si-Albert (BA) models. The networks considered for evaluation consist of $n=30$ nodes and the dimension of each decision state is $d=5$. Each agent holds a local function $f_i(\xb)=\frac{1}{2}\|\Jb^i \xb-\bb_{i}\|^2 $ with some randomly generated $\mathbf{J}^{i}\in\mathbb{R}^{nd\times nd}$ and $\bb_{i}\in\mathbb{R}^{nd}$. We select an example of $\mathbf{J}^{i}$ such that $
\Hb(\xb)=[\Jb^1_1,\dots,\Jb^n_n]\xb
$ satisfies Assumption \ref{ass-H2} with $\Jb^i_i$ being the $i$-th row of matrix $\Jb^i$. Hence, the game $\mathcal{G}$
 has a unique NE  
$\sbo\in\mathbb{R}^{nd}$. 

\textbf{Network Setup}.  In the ER graph, each link is independently established with a probability of $p_{er}=0.1$. In the WS graph, the average degree is set to $k_{ws}=5$, and the rewiring probability is $p_{ws}=0.2$. In the BA graph, a new node is added at each time step and connects to $m_{ba}=2$ existing nodes via preferential attachment. The randomly generated network structures from the ER, WS, and BA models are shown in Fig. \ref{fig-structure}.

\begin{figure}[http]
    \centering
    \includegraphics[width=5cm]{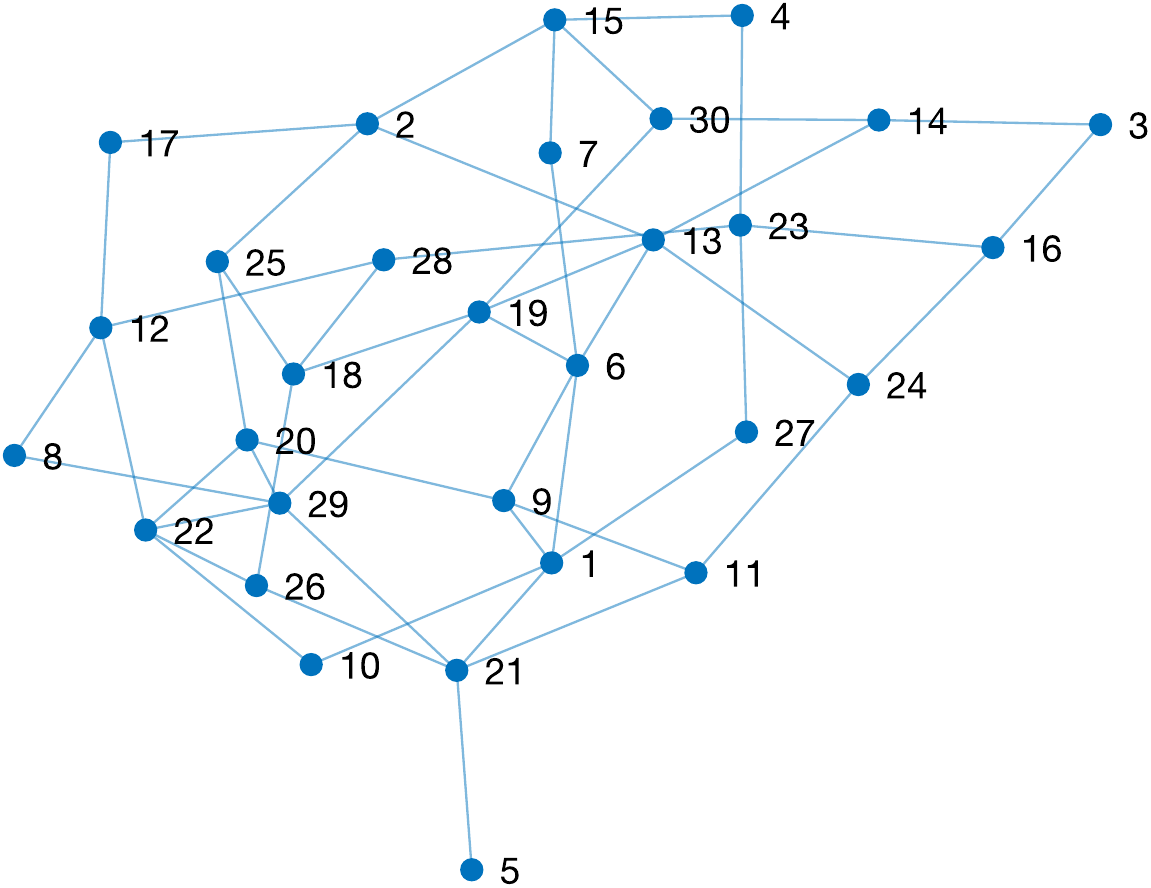}
    \begin{center}
    (a) Erd\H{o}s–R\'{e}nyi
    \end{center}
    \includegraphics[width=5cm]{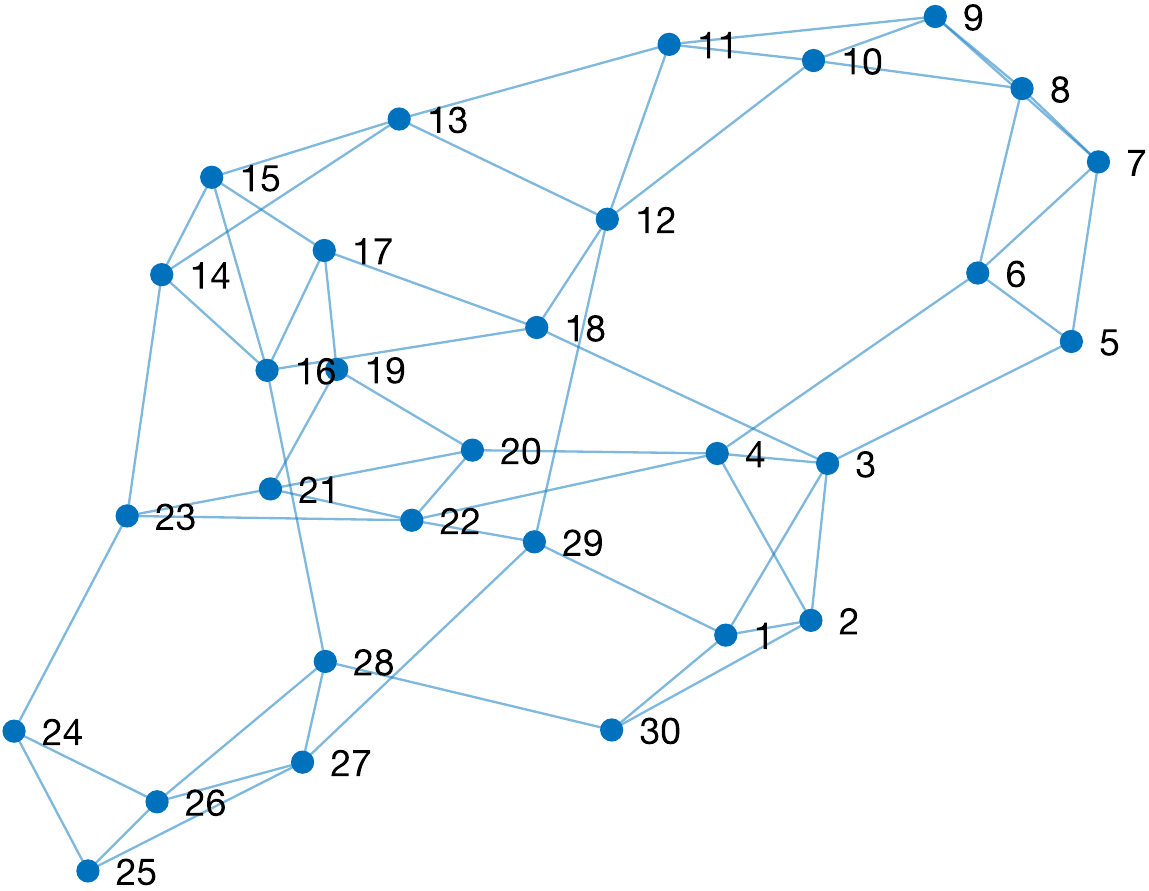}
    \begin{center}
    (b) Watts–Strogatz
    \end{center}
    \includegraphics[width=8cm]{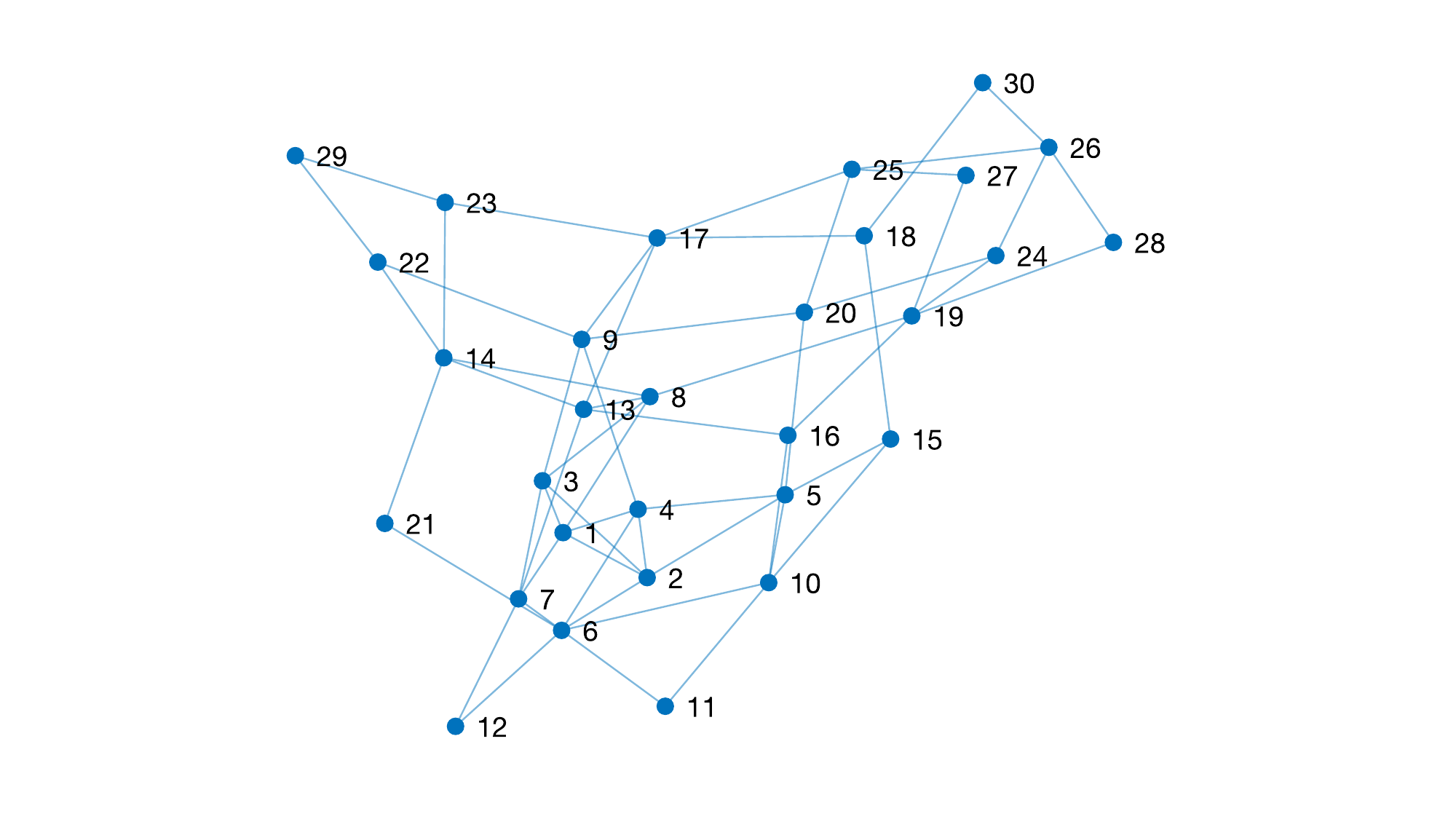}
    \begin{center}
    (c) Barab\'{a}si-Albert
    \end{center}
    \caption{Schematic illustration of network structures randomly generated using the ER, WS, and BA models.}
    \label{fig-structure}
\end{figure}

\textbf{Data Generation}. The parameters and objective functions of the agents over networks are the same as ring graph, and it ensures that the network game $\mathcal{G}$ has a unique NE.

\textbf{Comparative Analysis of Different Quantizers}. 
In addition to the quantizers $\Qb_{\rm sc},\Qb_{\rm gr}$ and $\Qb_{\rm si}$ mentioned before, we first introduce two more quantizers, the stochastic quantizer $\Qb_{\rm sto}$ \cite{QSTO} and the integer quantizer $\Qb_{\rm int}$ \cite{XY-CCFD} from the existing literature for comparison.
\[
\ba
&\Qb_{\rm sto}(\xb)=\frac{\|\xb\|_{\infty}}{2^{3}}{\rm sign}(\xb)*\lfloor\frac{2^{3}|\xb|}{\|\xb\|_{\infty}}+\overline{\omega}\rfloor,\\
&\Qb_{\rm int}(\xb)=\lfloor \xb+\frac{\mathbf{1}}{2}\rfloor,
\ea
\]
where \( * \) denotes the Hadamard product, \(\overline{\omega}\) is a stochastic vector with each element following a uniform distribution on \([0, 1]\), and \(\lfloor \cdot \rfloor\) represents the floor function.
The convergence curves of the NE seeking by Algorithm \ref{alg} with different quantizers on the network are shown in Fig. \ref{fig-net-quantizer}, where $\hat\Qb_{\mathrm{sc}}$ is an example of $\Qb_{\mathrm{sc}}$ by letting $\psib(t)= \mathbf{e}_i $ with $i=1+(t\   \mathrm{mod}\ D)$. 

Furthermore, an analysis of the convergence rate reveals an inherent trade-off between per-iteration progress and overall communication efficiency. On one hand, the sign quantizer $\Qb_{\mathrm{si}}$ transmits the full-dimensional state with high fidelity, which accelerates convergence per iteration. On the other hand, the aggressive dimension reduction employed by quantizers $\Qb_{\mathrm{sc}}$ and $\Qb_{\mathrm{gr}}$ cuts down the bit rate transmitted at the expense of reducing the amount of information exchanged in each round. Notably, quantizer $\Qb_{\mathrm{sc}}$ employs a scalarization technique that projects the high-dimensional state onto a single scalar value before quantization. This aggressive dimensionality reduction significantly limits the information conveyed in each communication round, thereby leading to smaller, less informative updates and ultimately slower convergence.

\begin{table}[http]
    \centering
       \caption{The parameters of quantizers and the relevant data of Algorithm \ref{alg} without quantizers and with quantizers $\hat{\Qb}_{\mathrm{sc}}$, $\hat{\Qb}_{\mathrm{gr}}$, $\Qb_{\mathrm{si}}$ over different network structures.}
       
    \begin{center}
    (a) Erd\H{o}s–R\'{e}nyi
    \end{center}
    \begin{tabular}{c||c|c|c|c|c|c}
    Quantizers & No $\Qb$   & $\hat{\Qb}_{\mathrm{sc}}$ & $\hat{\Qb}_{\mathrm{gr}}$& $\Qb_{\mathrm{si}}$&$\Qb_{\rm int}$& $\Qb_{\rm sto}$  \\ \hline
    Values / Iter. &150 &1 &4 &150 & 150&150 \\ \hline 
    Bits / Value &64 &2 &1 &1 &16 &3\\ \hline 
    Iterations/$10^4$ &0.70 &2.65 &1.21 &1.40 & 0.80 & 1.05 \\ \hline Total bits/$10^5$ &672 &5.30 &4.84 &21.0& 192& 47.2
    \end{tabular}
    
    \begin{center}
    (b) Watts–Strogatz
    \end{center}
    \begin{tabular}{c||c|c|c|c|c|c}
    Quantizers & No $\Qb$   & $\hat{\Qb}_{\mathrm{sc}}$ & $\hat{\Qb}_{\mathrm{gr}}$& $\Qb_{\mathrm{si}}$&$\Qb_{\rm int}$& $\Qb_{\rm sto}$  \\ \hline
   Iterations/$10^4$ &1.25 &2.67 &1.67 &1.40 & 1.30 & 1.35 \\ \hline Total bits/$10^5$ &1200 &5.34 &6.68 &21.0 & 312& 60.8
    \end{tabular}
    
    \begin{center}
    (c) Barab\'{a}si-Albert
    \end{center}
    \begin{tabular}{c||c|c|c|c|c|c}
    Quantizers & No $\Qb$   & $\hat{\Qb}_{\mathrm{sc}}$ & $\hat{\Qb}_{\mathrm{gr}}$& $\Qb_{\mathrm{si}}$&$\Qb_{\rm sto}$& $\Qb_{\rm int}$  \\ \hline
    Iterations/$10^4$ &0.81 &2.66 &1.22 &1.40 & 1.00&1.05 \\ \hline Total bits/$10^5$ &778 &5.32 &4.88 &21.0& 240 &47.3
    \end{tabular}
    
    \label{tab-net-quantizer}

\end{table}

\begin{figure}[http]
    \centering
    \includegraphics[width=7cm]{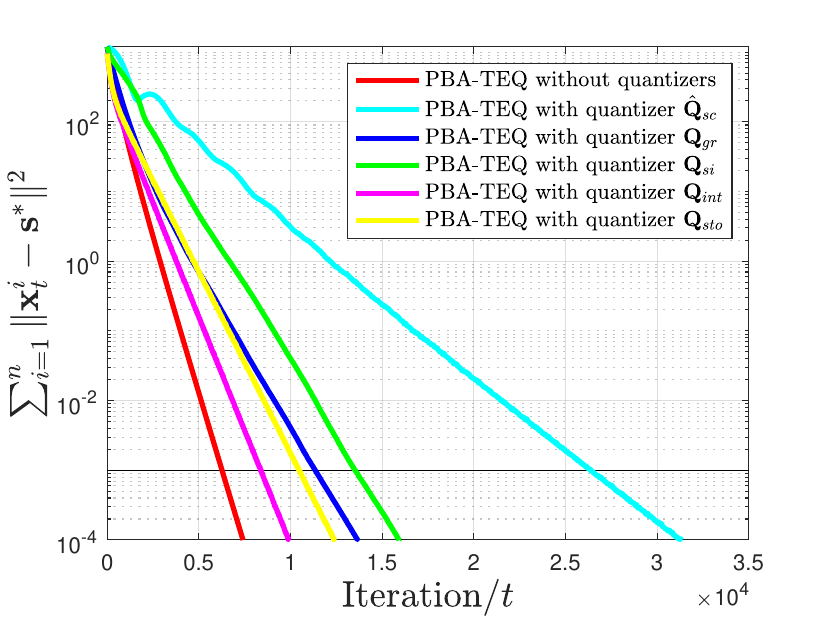}
    \begin{center}
    (a) Erd\H{o}s–R\'{e}nyi
    \end{center}
    \includegraphics[width=7cm]{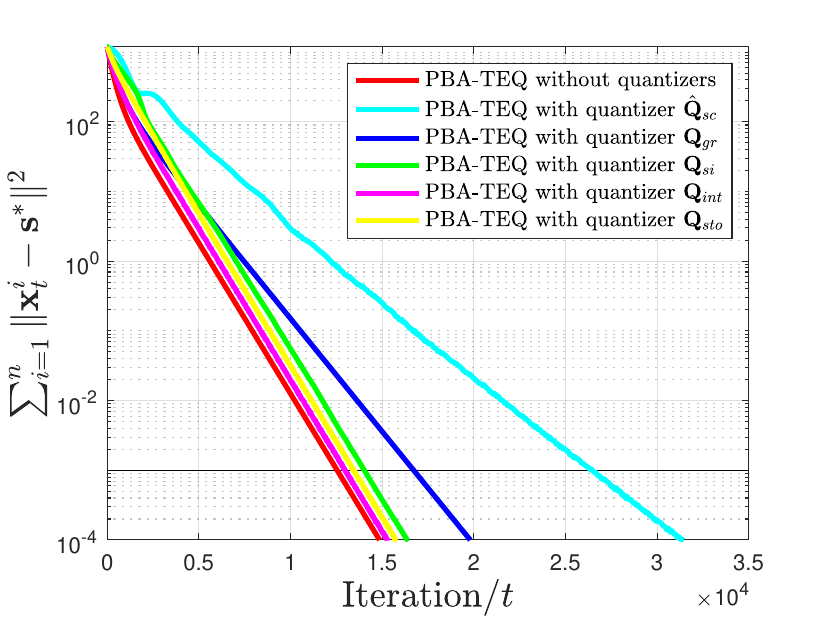}
    \begin{center}
    (b) Watts–Strogatz
    \end{center}
    \includegraphics[width=7cm]{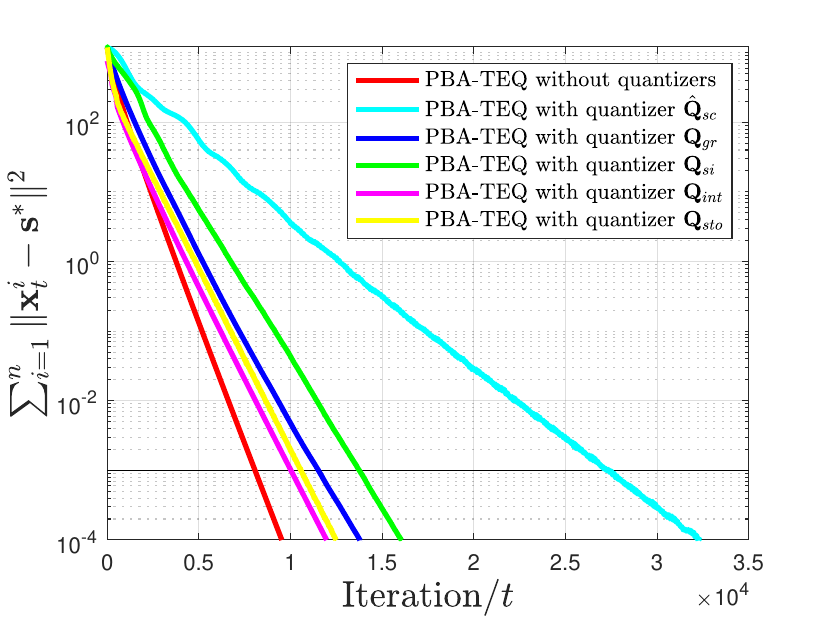}
    \begin{center}
    (c) Barab\'{a}si-Albert
    \end{center}
    \caption{The values $\sum_{i=1}^n\|\xb^i(t)-\sbo\|^2$ along with iterations in Algorithm \ref{alg} without quantizers and with quantizers $\hat{\Qb}_{\mathrm{sc}}$, $\hat{\Qb}_{\mathrm{gr}}$, $\Qb_{\mathrm{si}}$ over different network structures.}
    \label{fig-net-quantizer}
\end{figure}

\textbf{Comparative Analysis of Different Quantization Bits for Quantizer $\hat{\Qb}_{\mathrm{sc}}$}. We further investigate the impact of quantization precision on the convergence performance of Algorithm \ref{alg} by focusing on quantizer \(\hat{\Qb}_{\mathrm{sc}}\). In this experiment, while keeping all other parameters unchanged, we vary the number of quantization bits $m$ and adjust its corresponding constant $\gamma$ accordingly. In contrast, using fewer bits may cause the algorithm to fail to converge, while using more bits, although it can accelerate the convergence rate, will also increase the communication overhead. This empirical observation is in full agreement with the theoretical conclusion presented in Theorem \ref{thm-C1C2}.

\begin{table}[http]
    \centering
    \caption{The relevant data of Algorithm \ref{alg} with $\hat{\Qb}_{\mathrm{sc}}$ for different number of bits for each value over different network structures.}
       
    \begin{center}
    (a) Erd\H{o}s–R\'{e}nyi
    \end{center}
    \begin{tabular}{c||c|c|c|c}
        Bits for each value&64&32&16&8\\ \hline 
        Number of iterations /$10^{4}$&2.65	&2.96	&3.12	&3.17\\ \hline 
        Total bits /$10^{5}$&169.6	&94.7	&49.9	&25.4
   \end{tabular}  
    
    \begin{center}
    (b) Watts–Strogatz
    \end{center}
    \begin{tabular}{c||c|c|c|c}
        Bits for each value&64&32&16&8\\ \hline 
        Number of iterations /$10^{4}$ &2.67	&3.09 &3.18 &3.45\\ \hline 
        Total bits /$10^{4}$ &170.9	&98.9	&50.9	&27.6
   \end{tabular}  
    
    \begin{center}
    (c) Barab\'{a}si-Albert
    \end{center}
    \begin{tabular}{c||c|c|c|c}
        Bits for each value&64&32&16&8\\ \hline 
        Number of iterations /$10^{4}$ &2.66 &2.79 &2.85 &3.03\\ \hline 
        Total bits /$10^{4}$ &170.2	&89.3 &45.6 &24.2
   \end{tabular}  
    
    \label{tab-net-bit}

\end{table}

\section{Conclusion}
In this paper, we have proposed  quantizers in distributed NE seeking algorithms by a general class of ultimate-boundedness-based quantizers, and incorporated them to the PBA. A sufficient condition for maintaining linear convergence of the algorithm has also been established. Then we derived the lower bound of $\log_2(\mathcal{O}(nd))$ for the required bit rate in digital channels when the scalarization quantizer and the greedy quantizer are incorporated, and showed their advantage over the unquantized sparsifiers and sign quantizers.
In the future, we will further explore other ways to combine sparsification and quantization, aiming to investigate the potential for further reducing the required communication bits. Meanwhile, for specific distributed algorithms, we may be able to propose targeted methods to reduce communication. The quantitative relationship between the reduction of communication bits and the convergence rate of the algorithm will also be a focus of our future research.

\appendix

\section*{Appendices}

\section{Proof for Lemma \ref{thm:com}}
\label{app:pro}
\emph{Proof of $\Qb_{\rm sc}$}
: 
First, we prove $\Qb_{\mathrm{sc}}$ is ultimate-boundedness-based.
Letting $\Phi_t^{t+kD}$ be the state transition matrix of the system \begin{equation}
\label{eq:q1}
\xb_{t+1}=\xb_{t}-\kappa_0\psib(t)\psib(t)^{\top}\xb_{t}.\end{equation} 
It is easy to verify that 
$
  \|\Phi_t^{t+j}\xb\|\leq \|\xb\|,
$
for any $j\in[0,kD]$ and $\xb\in\mathbb{R}^{D}$ by noting that $-\psib(t)\psib(t)^{\top}$ is negative definite.
By recalling \cite{LW-DSFN}, \eqref{eq:q1} is uniformly globally linearly stable at zero equilibrium for $\kappa_0=\frac{1}{\alpha_1}$, then the Lyapunov function $V_e(\xb_,t):=\frac{1}{kD}\sum_{j=0}^{kD-1}\|\Phi_t^{t+j}\xb\|^2$ satisfies 
\begin{equation}
    \label{eq:Ve.a}
    \ba    \quad c_1\|\xb\|^2\leq    V_{t}\leq   c_2\|\xb\|^2,&\\
        V_e(\xb-\kappa_0\psib(t)\psib(t)^{\top}\xb,t+1)-V&(\xb,t)
    \leq -2c_3\|\xb_{t}\|^2,
    \ea
\end{equation}
for $c_1=\frac{1}{kD},c_2=1$ and $c_3=\frac{1}{2kD}$.

In addition, we have 
\begin{equation}
\label{eq:c4}
    \ba
&V_e(\xb+\yb,t)-V_e(\xb,t)\\=&\frac{1}{kD}\sum_{j=0}^{kD-1}((\Phi_t^{t+j}(2\xb+\yb))^\top\Phi_t^{t+j}\yb)\\\leq& 2\|\xb\|\|\yb\|+\|\yb\|^2\\
\leq& {c_3}\|\xb\|^2+\left(\frac{1}{c_3}+1\right)\|\yb\|^2.
\ea
\end{equation}
With \eqref{eq:Ve.a} and \eqref{eq:c4} in mind, we assume
\begin{equation}
    \label{eq:q1u}
    \|\xb\|\leq a,
\end{equation}
then
\begin{equation}
     \label{eq:q1d}
\ba
\ &V_e\left(\xb-\kappa_0a\psib(t)q_m\left(\frac{\psib(t)^{\top}\xb}{a}\right),t+1\right)-V_e(\xb,t)\\
\leq &V_e\left(\xb-\kappa_0a\psib(t)q_m\left(\frac{\psib(t)^{\top}\xb}{a}\right),t+1\right)\\&-V_e\left(\xb-\kappa_0a\psib(t)\frac{\psib(t)^{\top}\xb}{a},t+1\right)\\
 &+V_e(\xb-\kappa_0\psib(t)\psib(t)^{\top}\xb,t+1)-V_e(\xb,t)\\
    \leq& -\frac{1}{2kD}\|\xb\|^2+\left({2kD}+1\right)\left(\frac{a}{2^m\alpha_1}\right)^2,
\ea
\end{equation}
where the last inequality uses the fact
\[
\ba
\ &\left\|\kappa_0a\psib(t)q_m\left(\frac{\psib(t)^{\top}\xb}{a}\right)-\kappa_0a\psib(t)\left(\frac{\psib(t)^{\top}\xb}{a}\right)\right\|\\\leq&\kappa_0a\|\psib(t)\|\frac{l}{2}=\frac{\kappa_0al}{2}=\frac{\kappa_0a}{2^m}=\frac{a}{2^m\alpha_1}
\ea
\]
derived from $\|\psib(t)\|=1$ and \eqref{eq:qm} by noting that $|\psib(t)^\top\xb|\leq\|\psib(t)\|\|\xb\|\leq1$ with \eqref{eq:q1u}.


With \eqref{eq:Ve.a}, \eqref{eq:c4}, \eqref{eq:q1u} and \eqref{eq:q1d}, we directly obtain that $\Qb_{\mathrm{sc}}$ is $\left(1,\frac{\sqrt{{2kD}+1}}{2^m\alpha_1}\right)$-ultimate-boundedness-based with $c_1=\frac{1}{kD},c_2=1,c_3=\frac{1}{2kD},c_4=1$.

\emph{Proof of $\Qb_{\rm gr}$}
: 
Next, we prove $\Qb_{\mathrm{gr}}$ is ultimate-boundedness-based. We assume \eqref{eq:q1u} holds and let $V_e(\xb,t)=\|\xb\|^2$, then we have
\begin{equation}
     \label{eq:q2d}
\ba
&V_e\left(\xb-a\Qb_{\rm gr}\left(\frac{\xb}{a}\right)\right)-V_e(\xb)\\
= &\left(\left\|\xb-[\xb_e]_{s}{\bf e}_{s}\right\|^2-\left\|\xb\right\|^2\right)+\left\|a\Qb_{\rm gr}\left(\frac{\xb}{a}\right)\right\|^2 -\left\|[\xb_e]_{s}{\bf e}_{s}\right\|^2\\
\leq & -\frac{1}{2D}\left\|\xb\right\|^2+ (2D+1)\left(\frac{a}{2^m}\right)^2,
\ea
\end{equation}
where the last inequality is obtained by \eqref{eq:qm} by noting that \eqref{eq:q1u} holds.

In addition, we have
\begin{equation}
    \label{eq:c42}
 \ba
&V_e(\xb+\yb,t)-V_e(\xb,t)\\=&(2\xb+\yb))^\top\yb\\\leq& 2\|\xb\|\|\yb\|+\|\yb\|^2.
\ea    
\end{equation}

With \eqref{eq:q2d}, it can be concluded that $\Qb_{\rm gr}$ is $(1,{\frac{\sqrt{2D+1}}{2^m}})$ ultimate-boundedness-based with $c_1=c_2=1,c_3=\frac{1}{D},c_4=1$.

\emph{Proof of $\Qb_{\rm si}$}
: 
Finally, for the sign quantizer $\Qb_{\mathrm{si}}$, noting that 
\begin{equation}
    \label{eq:si}
\|\Qb_{\mathrm{si}}(\xb)-\xb\|_\infty\leq\frac{1}{2}
\end{equation}
if $\|\xb\|_\infty\leq1$. We let $u=\sqrt{n}$. For $\|\xb\|\leq au$ satisfying $\|\frac{x}{a}\|_\infty\leq1$, 
letting $V_e(\xb,t)=\|\xb\|^2$, then we have
\begin{equation}
     \label{eq:q3d}
\ba
\ &V_e\left(\xb-a\Qb_{\rm si}\left(\frac{\xb}{a}\right)\right)-V_e(\xb)\\
= &-2\xb^\top a\Qb_{\rm si}\left(\frac{\xb}{a}\right)+ a^2\left\|\Qb_{\rm si}\left(\frac{\xb}{a}\right)\right\|^2\\
\leq & -2\|\xb\|^2+ \|\xb\|{\sqrt{n}} a^2+a^2\left\|\Qb_{\rm si}\left(\frac{\xb}{a}\right)\right\|^2\\
    \leq& -\|\xb\|^2+\frac{5n}{4}a^2,
\ea
\end{equation}
where the first inequality is obtained by \eqref{eq:si}.
With \eqref{eq:c42}, it can be concluded that $\Qb_{\rm si}$ is $(\sqrt{n},\sqrt{\frac{5n}{4}})$ ultimate-boundedness-based with $c_1=c_2=c_3=c_4=1$.

\section{Proof of Theorem \ref{thm-TSC}}
\label{app:thm2}
By recalling \cite{XY-CCFD}, there holds
$\sigmab^i(t)-\zb^i(t)=\sum_{j\in\mathbb{N}_i}\Lb_{ij}\sigmab^j(t)$, then we know
\eqref{eq:PBA-TEQ} is equal to the following compact form as 
   \begin{equation}
   \label{eq:TSC_compact-d}
        \ba\sigmab_{t+1}&=\sigmab_{t}+\kappa_0g_t\Qc\left(\frac{\yb_t-\sigmab_{t}}{g_t},t\right),\\
\yb_{t+1}&=\yb_t- \alpha \Lb \sigmab_{t}-\eta\Tc \Hc(\yb_t),\\
g_t&=g_0{\gamma^t},\\
        \ea
    \end{equation}
where $\Qc(\frac{\yb_t-\sigmab_t}{g_t}):=[\Qb(\frac{\xb^1(t)-\sigmab^{1}(t)}{g_t},t);\dots;\Qb(\frac{\xb^n(t)-\sigmab^{n}(t)}{g_t},t)]\in \mathbb{R}^{nnd}$, $\sigmab_t:=\sigmab(t)=[\sigmab^{1}(t);\sigmab^{2}(t);\dots;\sigmab^{n}(t)]\in \mathbb{R}^{nnd}$, $\yb_t:=\yb(t)=[\yb^{1}(t);\yb^{2}(t);\dots;\yb^{n}(t)]\in \mathbb{R}^{nnd}$, 
$\Lc:=\Lb\otimes\mathbf{I}_{nd}
$, $\Tc:=\mathrm{blkdiag}(\Tb_1,\dots,\Tb_n)$, 
and $\yb_t$ and $\Hc$ 
denote the estimate vector of all agents and a function related to object functions, respectively, defined in Section \ref{sec.PBNSF}.


By \cite{RACM}, as Assumption \ref{ass-H1} holds, there exists a unique NE $\sbo\in\mathbb{R}^{nd}$ for the game satisfying $\Hc(\mathbf{1}_n\otimes\sbo)=\Hb(\sbo)=\mathbf{0}_{nd}$. Then it can be noticed that $\sigmab^\ast=\yb^\ast=\mathbf{1}_n\otimes\sbo$ is the equilibrium point of system \eqref{eq:TSC_compact-d}.
We introduce the state error by defining $\sil_t:=\sigmab_t-\mathbf{1}_n\otimes\sbo$, $\yl_t:=\yb_t-\mathbf{1}_n\otimes\sbo$. 
Next, we take the state error for \eqref{eq:TSC_compact-d} and yield
\begin{equation}
\ba\label{eq:TSC_0}
{\sil}_{t+1}&=\sil_t+\kappa_0\Qc(\yl_t-\sil_t,t),\\
{\yl}_{t+1}&=\yl_t-\big(\alpha\mathcal{L}\sil_t+\eta \Tc \Hl(\yl_t)\big), \\
g_t&=g_0{\gamma^t},
\ea
\end{equation}
where $\Hl(\yl):=\Hc(\yl+\mathbf{1}_n\otimes\sbo)-\Hc(\mathbf{1}_n\otimes\sbo)$.

To prove the convergence of the system, we will introduce a projection matrix and decompose $\yl_t$.
We let $\mathbf{S}\in\mathbb{R}^{n\times(n-1)}$ be a matrix whose rows are eigenvectors corresponding to nonzero eigenvalues of 
$\Lb$.
Then we decompose $\yl$ by defining $\ych:=\Sc^{\top}\yl_t$ and $\ypi:=\onb^{\top}\yl_t$,
where $\Sc:=\mathbf{S}\otimes \mathbf{I}_{nd}$ and $\onb:=\frac{1}{\sqrt{n}}\mathbf{1}_{n}\otimes \mathbf{I}_{nd}$.
By the fact
$
\Sc^{\top}\onb =\mathbf{0},\quad  \mathbf{I}_{nnd} = \Sc\Sc^{\top} +\onb\onb^{\top},
$
we can derive
\begin{equation}
    \label{eq:dec}
    \ba
    \yl_t = \Sc\ych + \onb\ypi.
    \ea
\end{equation}
Then it can be noticed that the convergence of $\yl_t$ can be shown if $\ypi$ and $\ych$ converge to the zero equilibrium, respectively.

With \eqref{eq:TSC_0} and the fact 
$
    \Lc \onb = \onb^{\top}\Lc = \mathbf{0},
$
we have
\begin{equation}
\ba\label{eq:TSC_D-d}
\sil_{t+1}&=\sil_t+\kappa_0\Qc(\yl_t-\sil_t,t),\\
\ychd&=\ych-\alpha\Sc^{\top}\mathcal{L}\sil_t-\eta\Sc^{\top} \Tc \Hl(\yl_t), \\
\ypid&=\ypi-\eta\onb^{\top} \Tc \Hl(\yl_t), \\
g_t&=g_0{\gamma^t}.
\ea
\end{equation}
Besides, we can also obtain that
\begin{equation}
\ba\label{eq:T_fact}
\|\Lc\sil_t\|^2&\leq 2\lambda^2_n\|\yl_t-\sil_t\|^2 + 2\lambda^2_n\|\ych\|^2.
\ea
\end{equation}

Now we are ready to propose Lyapunov functions for system \eqref{eq:TSC_D-d}. 
Define $V_{1,t}:=\frac{1}{2}\|\ych\|^2$, then 
\begin{equation}
\label{eq:V1}
\ba
&\quad {V}_{1,t+1}-V_{1,t} \\
&\leq  \big(-\frac{1}{2} \alpha \lambda_2 \|\ych\|^2 +  \frac{1}{2} \alpha \lambda_n \|\sil_t-\yl\|^2
\\
&\quad +\frac{1}{2}\eta(1+L^2_\Hc)\|\ych\|^2+\frac{1}{2}\eta L^2_\Hc \|\ypi\|^2\big)\\&\quad 
+\big(\alpha^2\|\Lc\sil_t\|^2+\eta^2L^2_\Hc 
(\|\ych\|^2+\|\ypi\|^2)]\big)\\
&\leq  \big( -\frac{1}{2} \alpha \lambda_2 \|\ych\|^2 +  \frac{1}{2} \alpha \lambda_n  \|\yl_t-\sil_t\|^2
\\
&\quad +\frac{1}{2}\eta(1+L^2_\Hc)\|\ych\|^2+\frac{1}{2}\eta L^2_\Hc \|\ypi\|^2\big)\\&\quad +[2\alpha^2\lambda_n^2 \|\yl_t-\sil_t\|^2+2\alpha^2\lambda_n^2\|\ych\|^2\\&\quad +\eta^2L^2_\Hc (\|\ych\|^2+\|\ypi\|^2)],
\ea
\end{equation}
where the first inequality is obtained by 
\[
    \ba
    \Lc \yl_t = \Lc (\Sc \Sc^{\top}+\onb\onb^{\top} )\yl_t = \Lc \Sc \ych,
    \ea
\]
\begin{equation}
    \label{eq:HY}
    \ba
    \|\Hl(\yl_t)\|^2\leq L^2_\Hc \|\yl_t\|^2=L^2_\Hc (\|\ych\|^2+\|\ypi\|^2),
    \ea
\end{equation}
derived from \eqref{eq:ass-H2} in Assumption \ref{ass-H2} and $\|\Tc\|\leq 1$, and the last inequality is obtained by \eqref{eq:T_fact}.

Define $V_{2,t}:=\frac{1}{2}\|\ypi\|^2$, then 
\begin{equation}
    \label{eq:V2}
    \ba
    &\quad V_{2,t+1}-V_{2,t}
    \\&\leq  \big(-\eta (\onb\ypi)^{\top}\Tc[\Hc(\yl_t+\mathbf{1}_n\otimes\sbo)\\&\quad -\Hc(\onb\ypi+\mathbf{1}_n\otimes\sbo)+\Hc(\onb\ypi+\mathbf{1}_n\otimes\sbo)\\
    &\quad -\Hc(\mathbf{1}_n\otimes\sbo)]\big)+\eta^2\|\Tc\Hl(\yl_t)\|^2\\
    &\leq  \big(-\eta\frac{\mu}{2n} \|\ypi\|^2+\eta\frac{n}{2\mu}L_\Hc^2 \|\ych\|^2\big)\\&\quad +\eta^2L^2_\Hc (\|\ych\|^2+\|\ypi\|^2),
    \ea
\end{equation}
where the second inequality is obtained by \eqref{eq:dec}, \eqref{eq:HY}, $\|\Tc\|\leq1$, $
    \onb^{\top}\Tc= \frac{1}{\sqrt{n}}\mathbf{I}_{nd},
$
the fact $
\Hc(\mathbf{1}_n\otimes\xb)
=\Hb(\xb)
$ for any $\xb\in\mathbb{R}^{nd}$ and \eqref{eq:ass-H1} in Assumption \ref{ass-H1}.

As $\Qb$ is ultimate-boundedness-based for some $u$, let's assume that
\begin{equation}
    \label{eq:condi}
\left\|\frac{\yb_t-\sigmab_t}{g_t}\right\|\leq u
\end{equation}
holds for this system (it will be proven later).
Then, by Definition \ref{def:com}, there exists a Lyapunov function $V_e$ satisfying \eqref{eq:def2}. Define $V_3(\yl_t-\sil_t,t)=V_3(\yb_t-\sigmab_t,t):=\sum_{i=1}^n V_e(\xb^i(t)-\sigmab^i(t),t)$, with \eqref{eq:PBA-TEQ},  then we have
\begin{equation}
    \label{eq:V3-d}
    \ba {}
    &\ \quad V_{3,t+1}-V_{3,t} 
    \\&=\sum_{i\in\mathbb{V}}(V_{e}(\xb^i(t+1)-\sigmab^i(t+1),t+1)\\&\quad\ -V_{e}(\xb^i(t)-\sigmab^i(t),t))
    \\
    &\leq  -c_3 \|\yl_t-\sil_t\|^2 + 2c_4\|\yl_t-\sil_t\|\| \alpha\mathcal{L}\sil_t+\eta \Tc \Hl(\yl_t)\| \\&\quad +c_4\|\alpha\Lc\sil_t+\eta\Tc\Hl(\yl_t)\|^2+\delta g_t^2\\
    &\leq -[c_3-\big( -
c^2_4 \alpha/r-c_4^2 \eta/r\\&\quad -2 \alpha\lambda_n^2 r\big)] \|\yl_t-\sil_t\|^2 +\big(( 2 \alpha\lambda_n^2 r\\
    &\quad + \eta rL_\Hc^2)\|\ych\|^2+ \eta rL_\Hc^2\|\ypi\|^2\big)
    \\&\quad +c_4(4\alpha^2\lambda_n^2(\|\yl_t-\sil_t\|^2+\|\ych\|^2)\\&\quad +2\eta^2L^2_\Hc (\|\ych\|^2+\|\ypi\|^2)\big)
    +\delta^2g_t^2,
    \ea
\end{equation}
where the first inequality is obtained by \eqref{eq:def2}, and the last inequality is obtained by \eqref{eq:T_fact}, \eqref{eq:HY} and Young's Inequality, with $r>0$ being an undetermined parameter to be chosen later.

Now we introduce some parameters $\xi_1, \xi_2,\dots,\zeta_1,\zeta_2,\dots>0$ independent of $\alpha$ and $\eta$, as follows
\[
\ba
&\xi_1=\frac{\lambda_2}{2},\quad\xi_2=\frac{3+3L_\Hc^2}{2}+\frac{6n^2L_\Hc^4}{\mu^2},\\
&
\xi_3=2 
\lambda_n^2,\\
&\xi_4=2c_4^2 ,\quad\xi_5=\frac{7\lambda_n}{12},\\
& \zeta_1=2\lambda_n^2+(1+2c_4+p)L_\Hc^2+4c_4\lambda_n^2,
\\
&\zeta_2=(1+2c_4+p))L_\Hc^2,\quad\zeta_3=2\lambda_n^2+4c_4\lambda_n^2,
\ea
\]
where $p=\frac{12L^2_\Hc n}{\mu}>0$.
Then we define the total Lyapunov functions of system \eqref{eq:TSC_0} as $ V_t: = V_{1,t}+pV_{2,t}+V_{3,t}$. By \eqref{eq:Ve.a}, there holds
\begin{equation}
\ba\label{eq:V_posi}
V_t\leq \frac{1}{2}\|\ych\|^2+\frac{p}{2}\|\ypi\|^2+  c_2\|\yl_t-\sil_t\|^2.
\ea
\end{equation}

We let
$
r\leq1, \quad \eta\leq\alpha
$
to simplify the following process,
then by \eqref{eq:V1}, \eqref{eq:V2} and \eqref{eq:V3-d}, we have
\[
    \ba
    &\quad  V_{t+1}-V_t\\
    &\leq  -(\xi_{1} \alpha -\xi_{2} \eta - \xi_3 \alpha r)\|\ych\|^2
    \\
    &\quad -(p\eta\frac{\mu}{4n}) \|\ypi\|^2-({c_3}-\xi_4 \alpha/r - \xi_5 \alpha)\|\yl_t-\sil_t\|^2\\&\quad +(\alpha^2\zeta_1\|\ych\|^2+\eta^2\zeta_2\|\ypi\|^2+\alpha^2\zeta_3\|\yl_t-\sil_t\|^2)+ \delta^2g^2_t.
    \ea
\]

In succession, we choose $r=\mathrm{min}\left\{\frac{\xi_1}{3\xi_3},1\right\}$ independent of $\alpha,\eta$, $\alpha=c_3\mathrm{min}\left\{\frac{r}{4\xi_4},\frac{1}{4\xi_5}\right\}$ independent of $\eta$, $\eta=\alpha\mathrm{min}\left\{\frac{\xi_1}{3\xi_2},1\right\}$, then we have
\[
\ba
    &\quad  V_{t+1}-V_t\\&\leq   -(\frac{\xi_1\alpha}{3}-\alpha^2\zeta_1)\|\ych\|^2 -(p\eta\frac{\mu}{4n}-\eta^2\zeta_2) \|\ypi\|^2\\&\quad -(\frac{c_3}{2}-\alpha^2\zeta_3) \|\yl_t-\sil_t\|^2+ \delta^2g^2_t,
    \ea
    \]

Letting $\alpha\leq \min\{\frac{\xi_1}{6\zeta_1},\}$
    \[
    \ba
    &\leq  -\mathrm{min}\left\{\frac{\xi_1\alpha}{3},\frac{\eta\mu}{4n},\frac{c_3}{4c_2 }\right\} V_{t}+ \delta^2g^2_t\\
    &= -\beta V_t+  \delta^2g^2_t,
    \ea
\]
where $\beta:=\mathrm{min}\left\{\frac{\lambda_2\alpha}{6},\frac{\eta\mu}{4n},\frac{c_3}{4c_2 }\right\}\in(0,1)$ and the last inequality is obtained by \eqref{eq:V_posi}. 
Then we have
\begin{equation}
\label{T_V_nega}
\ba
V_t&\leq V_0(1-\beta)^t+ g^2_0\delta^2\sum_{\tau=0}^t\frac{(1-\beta)^{t-\tau}}{\gamma ^{2\tau}}\\&\leq V_0 (1-\beta)^t+\frac{ g^2_0\delta^2}{1-\gamma^2/{(1-\beta)}}
((1-\beta)^ t-\gamma^{2t}).
 \ea
\end{equation}

The result \eqref{T_V_nega} holds as we assume \eqref{eq:condi} in the process of the system's dynamic. Next, we will prove this fact.
By \eqref{eq:Ve.a}, we derive
\[
\ba
\left\|\frac{\yl_t-\sil_t}{g_t}\right\|^2 &\leq \frac{{V_t}}{ g^2_t}\\&\leq\frac{{V_0}}{ g^2_0}(\frac{1-\beta}{\gamma^2})^t+\frac{ \delta^2}{ 1-{\gamma^{2}/(1-\beta)}}((\frac{1-\beta}{\gamma^2})^t-1).
\ea
\]

We let $\gamma^2=\frac{1-\beta}{1-\frac{\beta}{1+\epsilon}}$ and $g^2_0=\frac{V_0(\gamma^2/(1-\beta)-1)}{ \delta^2}$, then we have 
\[
\left\|\frac{\yl_t-\sil_t}{g_t}\right\|^2 \leq \frac{ \delta^2}{ 1/(1-\frac{1}{1+\epsilon}\beta)-1}\leq \frac{1+\epsilon}{\beta}\delta^2.
\]

It can be furthermore noticed that the \eqref{eq:condi} will hold if $u$ and $\delta$ satisfy the following condition 
\begin{equation}
    \label{eq:uanddelta_O}
    \frac{u}{\delta}\geq\sqrt{\frac{1+\epsilon}{\beta}}.
\end{equation}

To sum up, if $\Qb$ satisfies  \eqref{eq:uanddelta_O}, then \eqref{eq:condi} holds. 
Then \eqref{T_V_nega} holds and $V_t$ converges to zero linearly. 
With \eqref{eq:V_posi},
we can derive 
$\|\ych\|$ and $\|\ypi\|$ linearly converge to the zero equilibrium and so does $\|\yl_t\|$ by \eqref{eq:dec}. 
With the definition of 
$\yb_t$ and
$\yl_t=\yb_t-\mathbf{1}_n\otimes \sbo$  before, we know $\xb^i(t)$ in Algorithm \ref{alg} reaches the NE $\sbo$ linearly.
Then Theorem \ref{thm-TSC} is proven.

\section{Proof of Theorem \ref{lem-Q3}}
\label{appc}

According to proof in Appendix \ref{app:pro}, \eqref{eq:def2} holds if the following holds.
\begin{equation}
    \label{eq:condi2}
    \left\|\frac{\yl_t-\sil_t}{g_t}\right\|_\infty\leq 1.
\end{equation}
Then, similar to the proof of Theorem \ref{thm-TSC} in Section \ref{sec.conana}, we can obtain \eqref{T_V_nega}.
Therefore, the proof can be completed if we can prove \eqref{eq:condi2} for all $t\in\mathbb{N}_+$ when $\Qb_{\mathrm{si}}$ is applied. By $\yb_0=\sigmab_0=\mathbf{0}$, $\yl_t-\sil_t=\yb_t-\sigmab_t$, $g_0>0$ and $u>0$, we have $\left\|\frac{\yl_0-\sil_0}{g_0}\right\|_\infty\leq 1$. Next, we prove if \eqref{eq:condi2} holds for $t\geq0$, there holds $\left\|\frac{\yl_{t+1}-\sil_{t+1}}{g_{t+1}}\right\|_\infty\leq 1$. 

By \eqref{eq:TSC_0}, we have
\[
\ba
&\quad \left\|\frac{\yl_{t+1}-\sil_{t+1}}{g_{t+1}}\right\|_\infty\\
&=\left\|\frac{\dfrac{\yl_{t}-\sil_{t}}{g_t}-\Qb_{\mathrm{si}}\left(\frac{\yl_t-\sil_t}{g_t}\right)}{g_{t+1}/g_t}\right\|_\infty+\left\|\frac{\alpha \Lb \sil_{t}+\eta\Tc \Hl(\yl_t)}{g_{t+1}}\right\|\\
&\leq \frac{2\alpha}{g_{t+1}} 
{(}\lambda^2_n
\left\|\yl_t-
\sil_t\right\|^2+ \lambda^2_n\left\|\ych\right\|^2\\&\quad\ +L^2_\Hc \|\ych\|^2+L^2_\Hc\|\ypi\|^2)^{\frac{1}{2}}+\frac{1}{2\gamma}\\
&\leq\frac{2\alpha c_5}{g_0 \gamma}V_0^{\frac{1}{2}}+\frac{1}{2\gamma},
\ea\]
with $c_5=\sqrt{2\mathrm{max}\left\{(\lambda^2_n+L^2_\Hc),\frac{L^2_\Hc}{p}\right\}}$, where the first inequality is obtained by \eqref{eq:si}, \eqref{eq:T_fact} and \eqref{eq:HY}, and the last inequality is obtained by \eqref{eq:V_posi}, \eqref{T_V_nega} and letting $g^2_0={4V_0(\gamma^2/(1-\beta)-1)}$.
Then we choose $\gamma\in (\max\{\frac{1}{2},\sqrt{1-\beta}\},1)$ and $\alpha<\frac{(\gamma-0.5)g_0}{2 c_5\sqrt{V_0}}$. It can be obtained that $\left\|\frac{\yl_{t+1}-\sil_{t+1}}{g_{t+1}}\right\|_\infty\leq 1$ holds if $\left\|\frac{\yl_t-\sil_t}{g_t}\right\|_\infty\leq 1$. 
With the fact $\left\|\frac{\yl_0-\sil_0}{g_0}\right\|_\infty\leq 1$, we can conclude \eqref{eq:condi2} holds for any $t\geq0$. 
Then \eqref{T_V_nega} holds with $\gamma<1$ and we complete the proof. 

 \section{Proof of Theorem \ref{thm-C1C2}}
 \label{pr:Cor1}

 With the values of the parameters $(u,\delta)$ for different quantizers in Appendix \ref{app:pro} in mind, we analyze the condition \eqref{eq:uanddelta_O_INI} for $\Qb_{\rm sc}$ and $\Qb_{\rm gr}$, respectively. 
For $\Qb_{\rm sc}$, there holds $\frac{u}{\delta}=\frac{2^m\alpha_1}{\sqrt{{2kD}+1}}$, the condition \eqref{eq:uanddelta_O_INI} is 
\[
m\geq \mathrm{log}_2 \left(\frac{1}{\alpha_1}\sqrt{\frac{(1+\epsilon)(2knd+1)}{\beta} }\right).
\]
Further, by the expression of $\beta$ in Appendix, it can be easily noticed that $\beta\leq \frac{c_3}{4c_2}$. With $c_2=1,c_3=\frac{1}{2knd}$ for $\Qb_{\rm sc}$ in Appendix \ref{app:pro}, the condition \eqref{eq:uanddelta_O_INI} is satisfied if
\[
\ba
m\geq \underline{m}&= \mathrm{log}_2 \left(\frac{1}{\alpha_1}\sqrt{{(1+\epsilon)((2knd)^2+2knd)} }\right)\\&=\log_2(\mathcal{O}(nd)).
\ea
\]

Similarly, for $\Qb_{\rm gr}$, there holds $\frac{u}{\delta}=\frac{2^m}{\sqrt{{2nd}+1}}$, $c_2=1$ and $c_3=\frac{1}{2nd}$. Then the condition \eqref{eq:uanddelta_O_INI} is satisfied if
\[
\ba
m\geq \underline{m}&= \mathrm{log}_2 \left(\sqrt{{(1+\epsilon)((2nd)^2+2nd)} }\right)\\&=\log_2(\mathcal{O}(nd)).
\ea
\]
Now that \eqref{eq:uanddelta_O_INI} is satisfied, according to Theorem \ref{thm-TSC}, we complete the proof.


\end{document}